\DeclareMathAlphabet{\mathcal}{OMS}{cmsy}{m}{n} 
\definecolor{darkblue}{rgb}{0.0, 0.0, 0.35}
\definecolor{grayblue}{rgb}{0.10, 0.45, 0.6}
\definecolor{darkgreen}{rgb}{0.0, 0.3, 0.0}
\definecolor{darkpurple}{rgb}{0.3, 0.0, 0.25}
\definecolor{darkorange}{rgb}{1.0, 0.549, 0.0}
\definecolor{darkred}{rgb}{0.5, 0.0, 0.0}
\definecolor{verylightgray}{rgb}{0.9, 0.9, 0.9}
\definecolor{gray75}{gray}{0.3}
\definecolor{orcidlogocol}{HTML}{A6CE39}
\tikzset{
  orcidlogo/.pic={
    \fill[orcidlogocol] svg{M256,128c0,70.7-57.3,128-128,128C57.3,256,0,198.7,0,128C0,57.3,57.3,0,128,0C198.7,0,256,57.3,256,128z};
    \fill[white] svg{M86.3,186.2H70.9V79.1h15.4v48.4V186.2z}
                 svg{M108.9,79.1h41.6c39.6,0,57,28.3,57,53.6c0,27.5-21.5,53.6-56.8,53.6h-41.8V79.1z M124.3,172.4h24.5c34.9,0,42.9-26.5,42.9-39.7c0-21.5-13.7-39.7-43.7-39.7h-23.7V172.4z}
                 svg{M88.7,56.8c0,5.5-4.5,10.1-10.1,10.1c-5.6,0-10.1-4.6-10.1-10.1c0-5.6,4.5-10.1,10.1-10.1C84.2,46.7,88.7,51.3,88.7,56.8z};
  }
}
\newcommand\orcidicon[1]{\href{https://orcid.org/#1}{\mbox{\scalerel*{
\begin{tikzpicture}[yscale=-1,transform shape]
\pic{orcidlogo};
\end{tikzpicture}
}{|}}}}
\newcommand{\dd}{\mathrm{d}}		                     									
\newcommand{\hsp}{\hspace{0.05cm}}															
\newcommand{\ii}{\mathrm{i}}
\newcommand{\ee}{\mathrm{e}}
\newcommand{\TT}{\mathrm{T}}
\newcommand{\bra}[1]{\langle#1|}
\newcommand{\ket}[1]{|#1\rangle}
\newcommand{\tr}{\mathrm{Tr}\hspace{0.05cm}}
\newcommand{\Ai}{\mathrm{A}_{\tt i}}
\newcommand{\Ao}{\mathrm{A}_{\tt o}}
\newcommand{\Bi}{\mathrm{B}_{\tt i}}
\newcommand{\Bo}{\mathrm{B}_{\tt o}}
\newcommand{\pos}{\succcurlyeq}
\DeclareMathOperator*{\argmax}{arg\,max}
\newtheorem{thm}{Theorem}
\newtheorem{lem}[thm]{\bf Lemma}
\newtheorem{prop}[thm]{\bf Proposition}
\newtheorem{propsm}[thm]{\bf Proposition}
\newtheorem{obs}{\bf Observation}
\newtheorem{defn}{\bf Definition}
\theoremstyle{remark}																		
\newtheorem*{rem}{Remark}
\begin{document}

\title[Quantifying information flow in quantum processes]{Quantifying information flow in quantum processes}

\author{Leonardo S. V. Santos}
\email{leonardo.svsantos@student.uni-siegen.de}
\affiliation{Naturwissenschaftlich-Technische Fakultät, Universität Siegen, Walter-Flex-Straße 3, 57068 Siegen, Germany}
\author{Zhen-Peng Xu}
\email{zhen-peng.xu@ahu.edu.cn}
\affiliation{School of Physics and Optoelectronics Engineering, Anhui University, 230601 Hefei, China}
\affiliation{Naturwissenschaftlich-Technische Fakultät, Universität Siegen, Walter-Flex-Straße 3, 57068 Siegen, Germany}
\author{Jyrki Piilo}
\email{jyrki.piilo@utu.fi}
\affiliation{Department of Physics and Astronomy, University of Turku, FI-20014, Turun Yliopisto, Finland}
\author{Otfried Gühne}
\email{otfried.guehne@uni-siegen.de}
\affiliation{Naturwissenschaftlich-Technische Fakultät, Universität Siegen, Walter-Flex-Straße 3, 57068 Siegen, Germany}

\begin{abstract}
We present a framework for quantifying information flow within general quantum processes. For this purpose, we introduce the signaling power of quantum channels and discuss its relevant operational properties. This function supports extensions to higher-order maps, enabling the evaluation of information flow in general quantum causal networks and also processes with indefinite causal order. Furthermore, our results offer a rigorous approach to information dynamics in open systems that applies also in the presence of initial system-environment correlations, and allows for the distinction between classical and quantum information backflow.
\end{abstract} 

\date{\today}

\maketitle

\section*{Introduction}

A fundamental challenge in information theory lies in characterising the exchange of information between agents.~Prominent examples include communication tasks, where a sender encodes a message and transmits it through a channel to a receiver, as well as the storage of information in a memory for future retrieval. Ultimately, information processing occurs in physical systems governed by the laws of classical or quantum physics, always under the influence of some type of noise~\cite{Landauer91}.

Recent years have seen remarkable advances in quantum information science through the modelling of physical processes via high-order operations~\cite{Taranto2025}. This theoretical framework describes in full generality the quantum resources that are feasible within the constraints of causality. In essence, it allows efficient optimisation in quantum computing~\cite{CDP08b} and metrology~\cite{Chiribella12,Bavaresco2024}, as well as the operational modelling of quantum stochastic processes~\cite{PRFPM18}. From a foundational standpoint, such maps have proven to be powerful tools for exploring the quantum world in situations where the causality of events defies our common sense~\cite{OCB12}, for example, when the order~\cite{CDPV13} or the temporal direction~\cite{CL22} of transformations is coherently superposed.

This article introduces an operational framework for characterizing information flow between agents sharing arbitrary quantum resources. To this end, we employ the formalism of higher-order quantum operations. Central to our approach is the definition of the signalling power of a quantum channel, a function that possesses a broad set of desirable operational properties, admits efficient computation via semidefinite programming, and has a clear operational interpretation as the average success probability in a superdense-coding-like protocol (see Fig.~\ref{fig:1}).

Building on this, we extend the concept of information backflow~\cite{BLP09}, a central concept in the theory of quantum non-Markovianity~\cite{BLPV16}, in several directions. Notably, our framework remains well-defined even in the presence of initial system-environment correlations, a regime in which conventional approaches based on dynamical maps typically fail. Furthermore, we introduce the notion of quantum information backflow as a witness of genuine quantum memory: specifically, the observation of quantum information backflow indicates memory effects that cannot be simulated classically.


\begin{figure}
    \centering
    \includegraphics[width=1\linewidth]{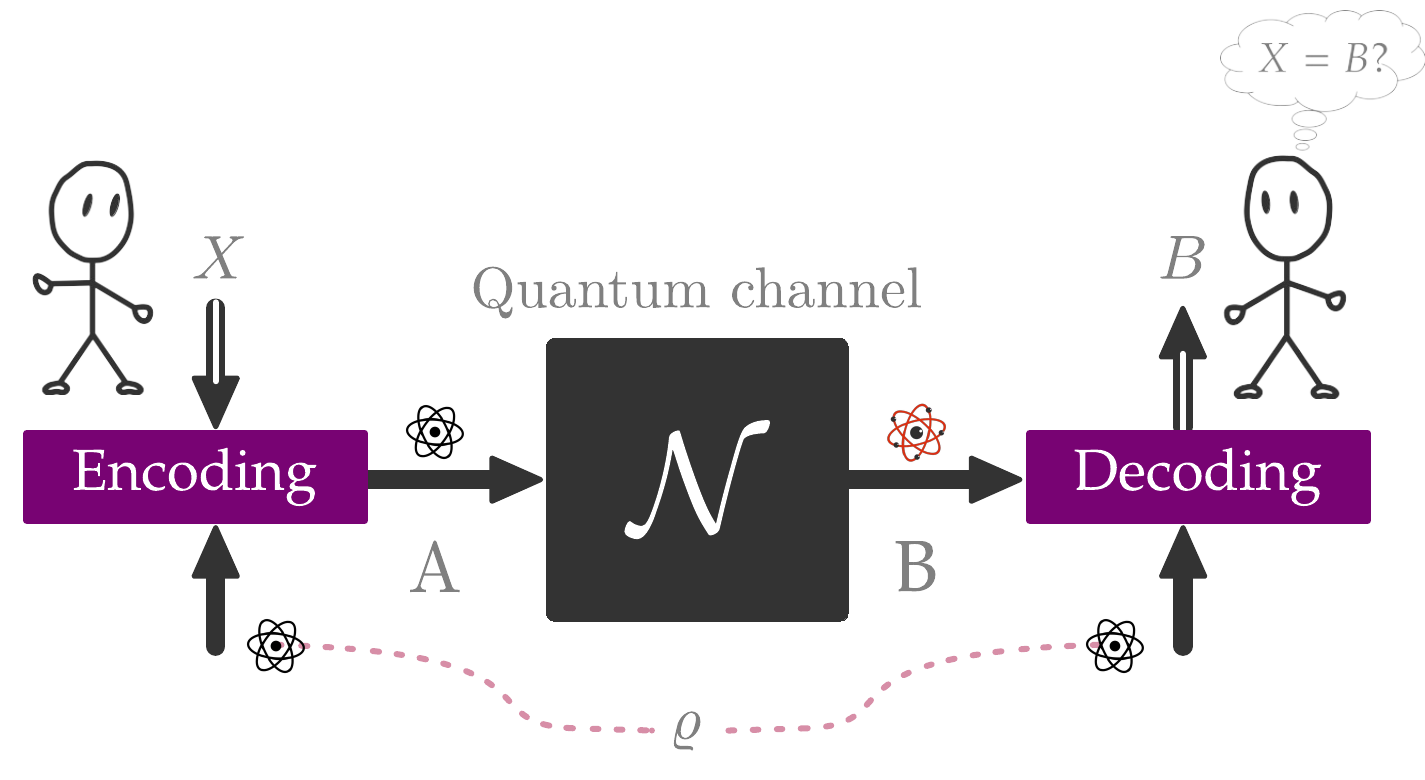}
    \caption{Communication scenario where two agents aim to transmit classical information (variables $X$ and $B$) using quantum systems: the agents have available a quantum channel $\mathcal{N}$ to exchange quantum systems and potentially share classical randomness and/or entanglement in the form of a shared bipartite state $\varrho$. In this scenario, the quantum channel $\mathcal{N}$ defines what we call a ``quantum process''. Although illustrative, in this article we consider more general quantum processes too; see Figure \ref{fig:2}.}
    \label{fig:1}
\end{figure}

\section*{Results}

{\bf Setup.---} We start by fixing our notation. Hereafter, $\mathcal{H}_{\rm X}$ denotes the finite-dimensional Hilbert space for the quantum system labelled by $\rm X$.
The symbols $\mathrm{id}_{\rm X}$, $\tr_{\rm X}$, and $\Gamma_{\rm X}$ refer to the identity matrix, partial trace and partial transposition over $\rm X$, respectively. Finally, we often invoke the Choi-Jamiółkowski (CJ) isomorphism \cite{Choi75,J72} to represent quantum transformations both as linear maps and as non-normalised states. We explain the theory behind it in detail in \hyperlink{AppendixA}{Appendix A}.

We consider processes in which the values of two classical variables $A$ and $B$ are generated by measurements performed on quantum systems (Fig.~\ref{fig:2}). It is illustrative to think of such variables as representing the outcomes of measurements performed by two agents, say Alice (A) and Bob (B), residing in separated laboratories (labs). To generate their outcomes, they may exchange quantum systems, and share non-signalling resources (entanglement or classical randomness) encoded in a bipartite quantum state. Moreover, both may have access to local randomness sources that are independent of any resource they share. Informally, that is to say that the agents have a ``free-will'' to decide which measurement to perform.

\begin{figure}
    \centering
    \includegraphics[width=0.65\linewidth]{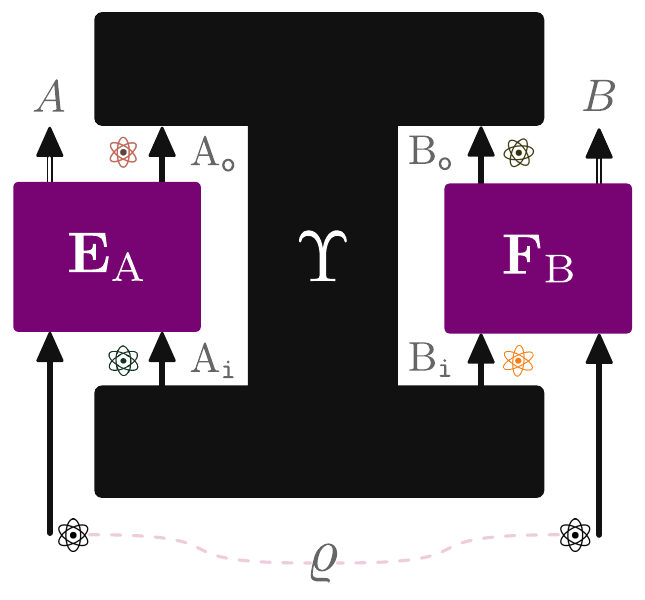}
    \caption{Depiction of the general setup we consider [cf. Eq.~\eqref{eq:g_born}]: Two agents manipulate quantum systems locally through quantum instruments, with $A$ and $B$ representing the measurement outcomes. $\Upsilon$ defines the process matrix, that is, the most general communication resource they can use. As in Fig.~\ref{fig:1}, they may share a quantum state $\varrho$, which encodes entanglement and/or classical randomness.}
    \label{fig:2}
\end{figure}

In a broad perspective, one may only assume that the quantum probabilistic framework is locally valid inside each lab~\cite{OCB12}.~Each experimenter can perform a measurement on the systems that he/she has access to, which is formally described by a suitable quantum instrument; that is, a general measurement where the post-measurement state is also taken into account. Mathematically, each outcome is associated to a completely-positive (CP) map, such that the sum of these maps is trace-preserving (TP). The probability of each outcome is computed by taking the trace of the state after applying the corresponding CP map.

In this general scenario, the measurement statistics are obtained via~\cite{OCB12}
\begin{equation}\label{eq:g_born}
\mathrm{Pr}(AB\mid \mathbf{E}_{\rm A}\mathbf{F}_{\rm B})=\tr[(E_A^\TT \otimes F_B^\TT)(\Upsilon\otimes \varrho)].
\end{equation}
Here, $\mathbf{E}_{\rm A}=\{E_{A=a}\}$ and $\mathbf{F}_{\rm B}=\{F_{B=b}\}$ are instruments (in the CJ picture) of Alice and Bob, $\varrho$ is a bipartite state, and the operator $\Upsilon$ is the so-called \emph{process matrix}, describing the communication resources connecting the labs (see Fig.~\ref{fig:2}). If $\Ai$($\Bi$) and $\Ao$($\Bo$) label the in- and output systems of Alice(Bob), then $\Upsilon$ is an operator on the tensor product space $\mathcal{H}_{\Ai}\otimes \mathcal{H}_{\Ao}\otimes \mathcal{H}_{\Bi}\otimes\mathcal{H}_{\Bo}$. With no further assumptions, $\Upsilon$ can be any operator for which Eq.~\eqref{eq:g_born} gives rise to non-negative and normalised probabilities. All possible quantum resources the agents may use for communication can be described by such an operator, including channels, channels with memory, and also processes with indefinite causal order. Each measurement operator, $E_A$ or $F_B$, acts on one part of the bipartite state $\varrho$ and in the corresponding in- and output spaces of each lab.

Given this scenario, it is natural to ask for signalling and information flow between the agents. Intuitively, the more sensitive Bob's local statistics are to Alice's encodings, the better she can 
send information to him. More formally, Alice can\emph{not} signal to Bob when his local 
measurement statistics are indifferent to Alice's setting; that is, when the marginal 
distribution
\begin{equation}\label{eq:non_signalling}
\mathrm{Pr}(B\mid \mathbf{E}_{\rm A}\mathbf{F}_{\rm B})=\sum_a \mathrm{Pr}(A=a,B\mid \mathbf{E}_{\rm A}\mathbf{F}_{\rm B})
\end{equation}
is independent on the instrument $\mathbf{E}_{\rm A}$. Equivalently, suppose they have agreed, \emph{a priori}, with a finite alphabet of alternatives $\mathcal{X}$. Alice encodes a letter $x\in\mathcal{X}$ into some system by applying a channel $\mathcal{E}_x$, and then sends it out of her lab. If Alice cannot signal to Bob, then his local measurement statistics are indifferent to Alice's encoding, i.e., the equality $\mathrm{Pr}(B\mid \mathcal{E}_x,\mathbf{F}_{\rm B})=\mathrm{Pr}(B\mid \mathcal{E}_{x^\prime},\mathbf{F}_{\rm B})$ holds for \emph{all} pair of channels, $\mathcal{E}_{x}$ and $\mathcal{E}_{x^\prime}$.

{\bf The signalling power of a channel.---} The simplest communication resource Alice can use to communicate quantum systems to Bob is a quantum channel. Mathematically, a channel is represented by a CPTP linear map $\mathcal{N}$ or, in the CJ picture, by a positive semidefinite operator $N$ satisfying $\tr_{\rm B} \hspace{0.05cm}N=\mathrm{id}_{\rm A}$.

If the labs are connected through a channel $N$, Eq.~\eqref{eq:g_born} reduces to
\begin{equation}\label{eq:g_born2}
\mathrm{Pr}(AB\mid \mathbf{E}_{\rm A}\mathbf{F}_{\rm B})=\tr[(E_A^\TT \otimes F_B^\TT)(N\otimes \varrho)].
\end{equation}
In this scenario, we can consider that Bob discards (traces-out) his system after measurement. Consequently, $\mathbf{F}_{\rm B}$ can be considered as a positive-operator-value-measure (POVM). Notice that Eq.~\eqref{eq:non_signalling} is satisfied for all POVMs if and only if $N=\mathrm{id}_{\rm A}\otimes \varpi$ for some state $\varpi$; equivalently, the channel traces-out the input state and prepare a system in a state $\varpi$ fixed \emph{a priori}, $\mathcal{N}(\cdot)=\tr(\cdot)\varpi$.

To assess the ``quality'' of a channel for signalling, we introduce its \emph{signalling power} as
\begin{equation}\label{eq:SP}
\mathrm{S}(N)=\log_2\max_{W} \tr(N W),
\end{equation}
with maximisation to be taken over operators $W$ representing channels in the opposite causal direction of the channel $N$, i.e., $W\pos 0$ and $\tr_{\rm A}\hspace{0.05cm}W=\mathrm{id}_{\rm B}$. The motivation of this definition will become clear after, where we demonstrate its connection with the superdense-coding protocol; moreover, it satisfies a plethora of properties that are required for a quantifier of the signalling power. At the moment, it is worth noting that $\rm S$ is evaluated as a semidefinite program and, as such, can be efficiently computed with standard methods employed in quantum information science~\cite{SC23}. Furthermore, $\rm S$ is \emph{equivalent} to (up to normalization and sign) the conditional min-entropy of the CJ state of the channel (see, e.g., Ref.~\cite{Konig2009}).

{\bf Properties of the signalling power.---} We will now establish the main properties and operational interpretation of the signalling power (see \hyperlink{AppendixB}{Appendix B} for formal proofs).


\begin{obs}\label{obs:1}
The signalling power [Eq.~\eqref{eq:SP}] satisfies the following properties:
\begin{itemize}
    \item[\rm (a)] $\mathrm{S}(N)$ is a convex and non-negative function of $N$;
    \item[\rm (b)] $\rm S$ is continuous with respect to the diamond norm;
    \item[\rm (c)] $\mathrm{S}(N)=0$ if and only if $N=\mathrm{id}_{\rm A}\otimes \varpi$ for some density 
    matrix $\varpi$;
    \item[\rm (d)] $\rm S$ is additive, i.e., $\mathrm{S}(N\otimes M)=\mathrm{S}(N)+\mathrm{S}(M)$.
\end{itemize}
\end{obs}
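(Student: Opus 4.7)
The plan leverages the semidefinite-program structure of $f(N) := \max_W \tr(NW)$ over $W\succeq 0$, $\tr_A W = \mathrm{id}_B$, so that $\mathrm{S}(N) = \log_2 f(N)$. Standard Lagrangian duality gives the dual $\min_{Y=Y^\dagger} \tr Y$ subject to $\mathrm{id}_A \otimes Y \succeq N$ (which forces $Y\succeq 0$), and Slater's condition holds via the strictly feasible $W = \mathrm{id}/d_A$, so strong duality is available throughout. For (a), non-negativity follows by evaluating at the feasible point $W = \omega_A \otimes \mathrm{id}_B$ with any state $\omega_A$: one gets $\tr(NW) = \tr(\omega_A \, \tr_B N) = \tr \omega_A = 1$, so $f(N)\geq 1$ and $\mathrm{S}(N)\geq 0$; convexity is inherited from $f$ being a pointwise supremum of linear functions of $N$. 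For (b), every feasible $W$ obeys $\|W\|_\infty \leq \tr W = d_B$, so H\"older yields $|f(N_1)-f(N_2)| \leq d_B\|N_1-N_2\|_1$; combined with the standard Choi-to-diamond estimate $\|N_1-N_2\|_1 \leq d_A\|\mathcal{N}_1-\mathcal{N}_2\|_\diamond$ and the fact that $f\geq 1$ (making $\log_2$ Lipschitz on the attained range), this yields continuity of $\mathrm{S}$.

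For (c), the direction $N = \mathrm{id}_A\otimes\varpi \Rightarrow \mathrm{S}(N)=0$ is immediate, since $\tr((\mathrm{id}_A\otimes\varpi)W) = \tr(\varpi\,\tr_A W) = \tr\varpi = 1$ for every feasible $W$. For the converse, $\mathrm{S}(N)=0$ gives $f(N)=1$, and strong duality produces a dual optimizer $Y^\star\succeq 0$ with $\tr Y^\star=1$ and $\mathrm{id}_A\otimes Y^\star \succeq N$. The residual $M := \mathrm{id}_A\otimes Y^\star - N$ is PSD and satisfies $\tr_B M = \mathrm{id}_A - \mathrm{id}_A = 0$. Any PSD operator with vanishing partial trace must itself vanish (decomposing $M=\sum_k|\psi_k\rangle\langle\psi_k|$ and taking $\tr_B$ forces each reduced state, and hence each $|\psi_k\rangle$, to be zero), so $N = \mathrm{id}_A\otimes Y^\star$ with $\varpi := Y^\star$.

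For (d), the bound $\mathrm{S}(N\otimes M)\geq \mathrm{S}(N)+\mathrm{S}(M)$ is obtained by restricting to product feasible points $W_N\otimes W_M$, since $\tr((N\otimes M)(W_N\otimes W_M)) = \tr(NW_N)\tr(MW_M)$, and optimizing each factor separately. For the reverse direction I use the dual: tensoring the individual optimizers $Y_N^\star,Y_M^\star$, tensor monotonicity of the PSD cone (namely, $X_1\succeq X_2\succeq 0$ and $Z_1\succeq Z_2\succeq 0$ imply $X_1\otimes Z_1\succeq X_2\otimes Z_2$) gives $\mathrm{id}_{AA'}\otimes Y_N^\star\otimes Y_M^\star \succeq N\otimes M$, so this product is dual-feasible for $N\otimes M$ with objective $\tr Y_N^\star \cdot \tr Y_M^\star = f(N)f(M)$. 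Weak duality then yields $f(N\otimes M)\leq f(N)f(M)$, and taking logarithms closes the additivity.

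The main technical points are the two invocations of strong duality (for (c), and for the sharpness of the product bound in (d)) together with the tensor-monotonicity argument in the sub-multiplicative half of (d); these are the steps where the specific partial-trace constraint defining the feasible set becomes essential, allowing separability across tensor factors, while the remaining manipulations are straightforward.
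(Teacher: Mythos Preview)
Your argument for convexity in (a) has a real gap: you correctly observe that $f(N)=\max_W\tr(NW)$ is convex in $N$ (pointwise supremum of linear functionals), but $\mathrm{S}(N)=\log_2 f(N)$ is the composition with the concave logarithm, so convexity of $\mathrm{S}$ does not follow. In fact the claim cannot be repaired, because $\mathrm{S}$ is \emph{not} convex: on qubits, take $N_1$ the Choi operator of the identity channel ($f(N_1)=4$, $\mathrm{S}(N_1)=2$) and $N_2=\mathrm{id}_{\mathrm A}\otimes\tfrac12\mathrm{id}_{\mathrm B}$ ($f(N_2)=1$, $\mathrm{S}(N_2)=0$). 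Since $\tr(N_2W)=1$ for every feasible $W$, one gets $f\big(\tfrac12(N_1{+}N_2)\big)=\tfrac12(4{+}1)=\tfrac52$ and hence $\mathrm{S}\big(\tfrac12(N_1{+}N_2)\big)=\log_2\tfrac52>1=\tfrac12\mathrm{S}(N_1)+\tfrac12\mathrm{S}(N_2)$. The paper does not supply a proof for this item either; only $f$, not $\mathrm{S}$, is convex. Your non-negativity argument via the feasible point $\omega_{\mathrm A}\otimes\mathrm{id}_{\mathrm B}$ is fine.

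Your proofs of (b), (c), and (d) are correct and take a genuinely different route from the paper, which never invokes SDP duality. For (c), the paper uses a primal perturbation: it sets $T=N-\mathrm{id}_{\mathrm A}\otimes\varpi$ with $\varpi=d_{\mathrm A}^{-1}\tr_{\mathrm A}N$, tests $W_\lambda=d_{\mathrm A}^{-1}\mathrm{id}+\lambda T$, and concludes $T=0$ from $\lambda\,\tr T^2\le 0$ for small $|\lambda|$. For (d), the paper obtains sub-additivity through the operational interpretation of Observation~\ref{obs:2}: it expands $\tr\big((N\otimes M)W\big)$ as a success probability for discriminating a product ensemble and then uses that optimal minimum-error POVMs for product states factorise. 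For (b), the paper again routes through that decomposition into encodings plus a POVM before bounding by the diamond norm. Your duality-based arguments are shorter and self-contained---the tensor-dual-feasibility step for (d) is particularly clean and bypasses the state-discrimination detour---while the paper's arguments have the advantage of tying each property directly to the superdense-coding picture exploited later in the text.
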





Concerning the operational interpretation, we make use of the superdense coding protocol \cite{BW92}. 
For simplicity, consider a qudit channel (i.e., input and output having the same dimensionality $d$).
Alice and Bob agreed \emph{a priori} with $d^2$ alternatives, say $\mathcal{X}=\{(k\ell)\}_{0\leq k,\ell\leq d-1}$, and their communication is assisted by a maximally entangled state, $\ket{\Phi_d^+}\propto\sum_{i}\ket{i}\otimes \ket{i}$, where $\{\ket{i}\}$ is a fixed orthonormal basis. Alice encodes a letter $x=(k\ell)$ by applying the corresponding Weyl unitary $V_{k\ell}=X_d^k Z_d^\ell$, where $X_d$ and $Z_d$ are defined via $X_d\ket{j}=\ket{j+1\,\,\mathrm{mod}\, d}$ and $Z_d\ket{j}=\ee^{\ii 2\pi j/d}\ket{j}$. If the communication channel is unitary, $\mathcal{N}(\cdot)=U\cdot U^\dagger$, Bob then receives one of $d^2$ states, $UV_{k\ell}\ket{\Phi_d^+}$, which are orthogonal by construction and hence can be distinguished in a single-shot measurement. Consequently, two dits of information were perfectly transmitted. Moreover, $2^{\mathrm{S}(N)}=d^2$ for unitary channels. For a noisy channel, instead, one has the following result:

\begin{obs}\label{obs:2}
For a channel $N$, used for communication in the superdense-coding protocol assisted by a maximally entangled state, one has
\begin{equation}\label{eq:expS_coincidence}
2^{\mathrm{S}(N)}=\max\sum_{x\in \mathcal{X}}\mathrm{Pr}(B=X|X=x),
\end{equation}
where $|\mathcal{X}|=(\dim\mathcal{H}_{\rm A})^2$, $\mathrm{Pr}(B=X|X=x)$ is the probability of correctly decoding the input $x$, and the maximum is taken over all possible encoding and decoding strategies.
\end{obs}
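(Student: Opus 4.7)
The plan is to prove both inequalities $\max \sum_x \mathrm{Pr} \le 2^{\mathrm{S}(N)}$ and $\max \sum_x \mathrm{Pr} \ge 2^{\mathrm{S}(N)}$ by setting up an explicit correspondence between superdense-coding strategies $(\{\mathcal{E}_x\},\{M_x\})$ and feasible points $W$ of the SDP defining $\mathrm{S}(N)$ in Eq.~\eqref{eq:SP}.

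For the upper bound I would associate, to any strategy, a ``reverse-direction'' channel $\mathcal{W}\colon \mathrm{B}_{\tt i}\to \mathrm{A}_{\tt o}$ defined operationally as follows: on input $\rho_{\mathrm{B}_{\tt i}}$, append Bob's half of a fresh maximally entangled state $|\Phi_d^+\rangle_{\mathrm{A}_0\mathrm{B}_0}$, perform the POVM $\{M_x\}$ on $\mathrm{B}_{\tt i}\otimes \mathrm{B}_0$, and conditional on outcome $x$ apply $\mathcal{E}_x$ to the resulting state on $\mathrm{A}_0$. The map $\mathcal{W}$ is CPTP by construction, so its Choi operator $W$ automatically satisfies $W\succeq 0$ and $\tr_{\mathrm{A}_{\tt o}} W = \mathrm{id}_{\mathrm{B}_{\tt i}}$, making it feasible in Eq.~\eqref{eq:SP}. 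A direct calculation using the Born rule and the ricochet identity $(\mathcal{E}_x\otimes \mathrm{id})(|\Phi_d^+\rangle\langle\Phi_d^+|) = \tfrac{1}{d} E_x$ then yields
\begin{equation*}
\sum_x \mathrm{Pr}(B = X \mid X = x) = \tr(N\, W),
\end{equation*}
and hence $\sum_x \mathrm{Pr} \le \max_W \tr(NW) = 2^{\mathrm{S}(N)}$.

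For the reverse inequality I would show that the bound is saturated with $|\mathcal{X}|=d^2$. Given an optimal $W$ in Eq.~\eqref{eq:SP}, I take Weyl encodings $\mathcal{E}_x(\cdot) = V_x(\cdot) V_x^{\dagger}$ together with the POVM
\begin{equation*}
M_x := \tfrac{1}{d}\,(\mathrm{id}\otimes V_x^{\TT})\,\tilde W\,(\mathrm{id}\otimes V_x^{*}),
\end{equation*}
where $\tilde W$ denotes $W$ transported to Bob's two registers $\mathcal{H}_{\mathrm{B}_{\tt i}}\otimes \mathcal{H}_{\mathrm{B}_0}$ via the canonical identification of his half of $|\Phi_d^+\rangle$ with $\mathcal{H}_{\mathrm{A}_{\tt o}}$. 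Positivity $M_x\succeq 0$ is immediate from $W\succeq 0$. The completeness relation $\sum_x M_x = \mathrm{id}$ reduces to the Weyl $1$-design identity $\sum_x V_x^{\TT}\, Y\, V_x^{*} = d\,\tr(Y)\,\mathrm{id}$ applied on Bob's ancillary register, combined with the constraint $\tr_{\mathrm{A}_{\tt o}} W = \mathrm{id}_{\mathrm{B}_{\tt i}}$. Plugging this strategy into the formula of the first step and using $(V_x\otimes \mathrm{id})|\Phi_d^+\rangle = (\mathrm{id}\otimes V_x^{\TT})|\Phi_d^+\rangle$ yields $\sum_x \mathrm{Pr} = \tr(N\, W) = 2^{\mathrm{S}(N)}$.

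The only genuinely conceptual step is the converse construction: one must recognise that the Weyl family promotes an essentially arbitrary positive operator $\tilde W$ to a valid POVM after twirling, precisely because the Weyls form a unitary $1$-design and $W$ already satisfies the opposite-direction channel constraint. Everything else is linear algebra, though some care is required to track Choi conventions and partial transposes so that the operators land in compatible Hilbert spaces with the correct normalisations.
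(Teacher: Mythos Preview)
Your proposal is correct and follows essentially the same two-inequality strategy as the paper: for the upper bound the paper builds $\tilde W=\sum_x M_x\star\Phi^+\star E_x$ via link products, which is exactly the Choi operator of your operationally described measure-then-encode channel $\mathcal W$; for the lower bound the paper also takes unitary encodings (an arbitrary orthogonal basis of maximally entangled vectors rather than Weyl operators specifically) and defines the POVM as $E_x=d^{-1}(u_x\star W)$, which is your conjugated-$W$ construction written in link-product notation. The only differences are presentational---your explicit use of the Weyl $1$-design identity versus the paper's resolution $\sum_x u_x = d\,\mathrm{id}$, and your physical description of $\mathcal W$ versus the paper's algebraic one---so the two proofs are interchangeable.
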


The effectiveness of the superdense coding protocol relies on entanglement \cite{HP12,MNGRC21}. In the ideal scenario, it is preserved as Alice's encoding and the communication channel act like a local unitary transformation. For the case of an entanglement-breaking channel, that is, a quantum channel that when applied to a part of any bipartite state makes it separable, one has the following:

\begin{obs}\label{prop:entanglement_breaking}
For entanglement-breaking channels, the signalling power is upper bounded 
by $\log_2(\dim\mathcal{H}_{\rm A})$.
\end{obs}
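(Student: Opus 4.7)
The plan is to use the Horodecki--Shor--Ruskai characterisation of entanglement-breaking channels: $\mathcal{N}$ is entanglement-breaking if and only if its CJ operator $N$ is separable across the $\mathrm{A}|\mathrm{B}$ cut. I would therefore decompose $N = \sum_i \lambda_i\, \sigma_i \otimes \rho_i$ with $\lambda_i > 0$ and $\sigma_i, \rho_i$ density matrices on $\mathcal{H}_{\rm A}$ and $\mathcal{H}_{\rm B}$, respectively. The CPTP constraint $\tr_{\rm B} N = \mathrm{id}_{\rm A}$ forces $\sum_i \lambda_i \sigma_i = \mathrm{id}_{\rm A}$, and taking a further trace yields the normalisation $\sum_i \lambda_i = \dim \mathcal{H}_{\rm A}$, which is what will drive the final bound.

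Next I would upper bound $\tr(NW)$ over all feasible $W$ in the SDP of Eq.~\eqref{eq:SP}, i.e.\ $W \succcurlyeq 0$ with $\tr_{\rm A} W = \mathrm{id}_{\rm B}$. Linearity and the partial-trace identity give
\[
\tr(NW) \;=\; \sum_i \lambda_i\, \tr\!\bigl[\sigma_i\, \tau_{\rho_i}(W)\bigr], \qquad \tau_\rho(W) := \tr_{\rm B}\!\bigl[(\mathrm{id}_{\rm A} \otimes \rho)\, W\bigr].
\]
The key observation is that $\tau_\rho(W)$ is automatically a density matrix on $\mathcal{H}_{\rm A}$: it is positive semidefinite because both $W$ and $\rho$ are, and its trace equals $\tr[\rho\cdot \tr_{\rm A} W] = \tr[\rho \cdot \mathrm{id}_{\rm B}] = 1$, using precisely the normalisation of $W$ as the CJ operator of a channel in the opposite direction.

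Since $\sigma_i$ and $\tau_{\rho_i}(W)$ are both density matrices, their Hilbert--Schmidt inner product is at most $1$, so $\tr(NW) \le \sum_i \lambda_i = \dim\mathcal{H}_{\rm A}$. Maximising the left-hand side over $W$ and taking $\log_2$ yields $\mathrm{S}(N)\le \log_2(\dim\mathcal{H}_{\rm A})$.

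I expect the main obstacle to be purely bookkeeping: one must keep the two CJ normalisations straight (channels have $\tr N = \dim\mathcal{H}_{\rm A}$ rather than $1$, and analogously for $W$ on $\mathcal{H}_{\rm B}$), so that the partial traces land in the right spaces and with the right weights; a sloppy normalisation would give either a trivial bound of $1$ or the wrong factor. An alternative, more operational route would combine the measure-and-prepare form of entanglement-breaking channels with Observation~\ref{obs:2}: any dense-coding-like protocol through $\mathcal{N}$ is equivalent to one in which Alice's signal is first classically decoded inside the channel and then re-prepared, so the average success probability cannot exceed $\dim\mathcal{H}_{\rm A}$. The direct SDP calculation above, however, is shorter and does not rely on invoking the operational characterisation as a separate lemma.
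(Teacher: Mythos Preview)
Your proof is correct and essentially the same as the paper's. The paper uses the Holevo measure-and-prepare form $N=\sum_x E_x\otimes\varrho_x$ with $\{E_x\}$ a POVM, which is just your separable decomposition with $E_x=\lambda_x\sigma_x$; it then bounds $\tr[E_x(W\star\varrho_x)]\le \tr E_x$ and sums to $\dim\mathcal{H}_{\rm A}$, which is the same step as your $\tr[\sigma_i\,\tau_{\rho_i}(W)]\le 1$ summed against the weights $\lambda_i$.
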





\begin{figure}
    \centering
    \includegraphics[width=0.65\linewidth]{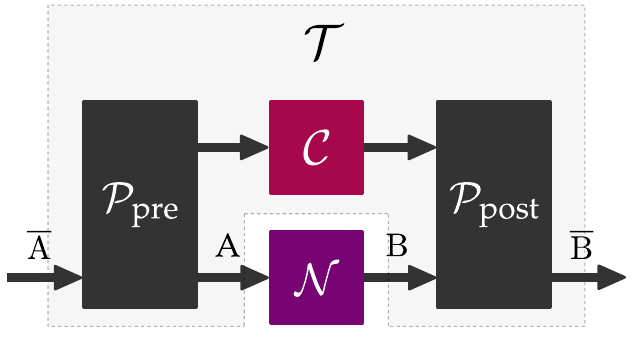}
    \caption{Quantum superchannel $\mathcal{T}$ transforming a channel $\mathcal{N}_{\rm A\to B}$ into $\hat{\mathcal{N}}_{\rm \bar{A}\to \bar{B}}=\mathcal{T}(\mathcal{N})$. It is defined in terms of a pre- and post-processing channels that linked through a memory described by a channel $\mathcal{C}$.}
    \label{fig:supermap}
\end{figure}

{\bf A data-processing inequality.---} Next, we discuss how different channel transformations affect the signalling power, in particular we aim to state a general data-processing inequality (DPI). First, recall that a channel can be transformed in many different ways, with appropriated pre- and post- processing operations that are potentially linked through an auxiliary memory system (see Fig.~\ref{fig:supermap}). Mathematically, a linear map $\mathcal{T}$ represents an admissible channel transformation (or superchannel) if it maps channels into channels, even when applied locally on a bipartite channel~\cite{CDP08a}. As illustrated in Fig.~\ref{fig:supermap}, any superchannel can be decomposed as
\begin{equation}\label{eq:admissible_maps}
\mathcal{T}[\mathcal{N}]=\mathcal{P}_{\rm post}\circ (\mathcal{N}\otimes \mathcal{C})\circ \mathcal{P}_{\rm pre},
\end{equation}
In the CJ picture, we write
\begin{equation}
    \mathcal{T}[\mathcal{N}]=T\star N,
\end{equation}
where $T$ denotes the CJ representation of $\mathcal{T}$ and $\star$ the link product \cite{CDP09}. The link product is formally defined in \hyperlink{AppendixA}{Appendix A}. Without prejudice to the understanding of the rest of this article, this equation can be read simply as a notation.

The connection between pre- and post-processing through the channel $\mathcal{C}$ defines the \emph{memory} of the high-order transformation $\mathcal{T}$. In a memoryless transformation, such maps are independent of each other; equivalently, $\mathcal{C}$ is a trace-and-prepare channel. Classical superchannels are those presenting only classical memory (no quantum memory). Formally, this is the same as saying that $\mathcal{C}$ is a measure-and-prepare (entanglement-breaking) channel  \cite{RBL18}. If the superchannel cannot be simulated in this manner, we say that the transformation $\mathcal{T}$ has quantum memory.

To state the DPI for $\rm S$, we first recall its version for quantum states: the entropy of a quantum state cannot decrease under the action of a bistochastic channel. A channel is bistochastic if it has the maximally mixed state as a fixed point. In particular, if input and output have the same dimensionality, they satisfy $\tr_{\rm B}\hspace{0.05cm}N=\mathrm{id}_{\rm A}$ and $\tr_{\rm A}\hspace{0.05cm}N=\mathrm{id}_{\rm B}$; that is, $N$ represents at the same time a channel from $\rm A$ to $\rm B$ and a channel from $\rm B$ to $\rm A$. A superchannel, instead, maps channels from $\rm A$ to $\rm B$ into channels from, say, $\rm \bar{A}$ to $\rm \bar{B}$. We then define the CJ operator $T$ of a superchannel as bistochastic if it also represents an admissible transformation of channels from $\rm B$ to $\rm A$ into channels from $\rm \bar{B}$ to $\rm \bar{A}$. In essence, $T$ is bistochastic if it represents two causally inverse channel transformations at the same time. 

\begin{obs}\label{prop:DPI}
For a given superchannel with CJ operator $T$, the following statements are equivalent:
\begin{itemize}
    \item[(a)] $\mathrm{S}(T\star N)\leq \mathrm{S}(N)$ for all input channel $N$.
    \item[(b)] The superchannel $T$ maps non-signalling channels into non-signalling channels.
    \item[(c)] $T$ is bistochastic.
\end{itemize}
\end{obs}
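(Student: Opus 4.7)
My plan is to prove the three-way equivalence via the cycle $(\text{c})\Rightarrow(\text{a})\Rightarrow(\text{b})\Rightarrow(\text{c})$, with the middle step a one-line consequence of Observation~\ref{obs:1}.

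For $(\text{c})\Rightarrow(\text{a})$, I would exploit the semidefinite program built into the definition of $\mathrm{S}$. Writing $2^{\mathrm{S}(N)}=\max_W \tr(NW)$ with $W\pos 0$ and $\tr_{\rm A}W=\mathrm{id}_{\rm B}$, the feasible $W$ are precisely the CJ operators of channels from $\rm B$ to $\rm A$. Given bistochasticity of $T$, any $\bar W$ feasible in the analogous SDP for $\mathrm{S}(T\star N)$ (i.e., a $\bar{\rm B}\to\bar{\rm A}$ channel) yields via $W:=T\star\bar W$ a valid $\rm B\to A$ channel: positivity follows from $T\pos 0$ and positivity-preservation of the link product, while the trace-preservation $\tr_{\rm A}W=\mathrm{id}_{\rm B}$ is exactly the assertion that $T$ admissibly transforms $\{\rm B\to A\}$ channels into $\{\bar{\rm B}\to\bar{\rm A}\}$ channels. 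A short link-product manipulation matches the objective $\tr[(T\star N)\bar W]$ with $\tr(NW)$, so every value attained in the SDP for $\mathrm{S}(T\star N)$ is also attained in the SDP for $\mathrm{S}(N)$, giving $\mathrm{S}(T\star N)\le\mathrm{S}(N)$.

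The implication $(\text{a})\Rightarrow(\text{b})$ is immediate: for non-signalling $N$ one has $\mathrm{S}(N)=0$ by part~(c) of Observation~\ref{obs:1}, so (a) forces $\mathrm{S}(T\star N)\le 0$; non-negativity (part~(a) of the same observation) then gives $\mathrm{S}(T\star N)=0$, and part~(c) again identifies $T\star N$ as non-signalling.

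The main obstacle is $(\text{b})\Rightarrow(\text{c})$. I would parametrise the non-signalling channels as $N_\varpi=\mathrm{id}_{\rm A}\otimes\varpi$, so that (b) reads $T\star N_\varpi=\mathrm{id}_{\bar{\rm A}}\otimes\Phi(\varpi)$ for every state $\varpi$ on $\mathcal{H}_{\rm B}$, where $\Phi$ is a well-defined linear map from $\rm B$ to $\bar{\rm B}$ easily shown to be CPTP. Expanding the link product as $\tr_{\rm B}[R(\varpi^\TT\otimes\mathrm{id}_{\bar{\rm A}\bar{\rm B}})]=\mathrm{id}_{\bar{\rm A}}\otimes\Phi(\varpi)$ with $R:=\tr_{\rm A}T$, and using linearity in $\varpi$ together with the CJ isomorphism, I would factorise $R=\mathrm{id}_{\bar{\rm A}}\otimes C_\Phi$ with $C_\Phi\pos 0$ and $\tr_{\bar{\rm B}}C_\Phi=\mathrm{id}_{\rm B}$. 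Combined with the forward admissibility $\tr_{\bar{\rm B}}T=\mathrm{id}_{\rm B}\otimes T^{(2)}$ and $\tr_{\rm A}T^{(2)}=\mathrm{id}_{\bar{\rm A}}$, this structural constraint should then deliver the reverse admissibility conditions $\tr_{\bar{\rm A}}T=\mathrm{id}_{\rm A}\otimes\tilde T^{(2)}$ and $\tr_{\rm B}\tilde T^{(2)}=\mathrm{id}_{\bar{\rm B}}$, i.e., bistochasticity. The delicate step is reconciling the two factorisations (on $\tr_{\rm A}T$ versus $\tr_{\bar{\rm A}}T$); I expect the consistency relations between iterated partial traces of $T$, together with its Hermiticity, to close the gap.
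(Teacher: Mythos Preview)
Your cycle $(\text{c})\Rightarrow(\text{a})\Rightarrow(\text{b})\Rightarrow(\text{c})$ and the SDP argument for $(\text{c})\Rightarrow(\text{a})$ are exactly the paper's route. The one substantive issue is the \emph{direction} of the reverse superchannel. You take bistochasticity to mean that $T$ also sends $\{\rm B\to A\}$ channels to $\{\bar{\rm B}\to\bar{\rm A}\}$ channels, i.e.\ $\tr_{\bar{\rm A}}T=\mathrm{id}_{\rm A}\otimes\tilde T^{(2)}$ with $\tr_{\rm B}\tilde T^{(2)}=\mathrm{id}_{\bar{\rm B}}$. But what is actually needed in $(\text{c})\Rightarrow(\text{a})$---and what the paper's proof uses---is the opposite: $T$ sends $\{\bar{\rm B}\to\bar{\rm A}\}$ channels to $\{\rm B\to A\}$ channels, equivalently $\tr_{\rm A}T=\mathrm{id}_{\bar{\rm A}}\otimes C$ with $\tr_{\bar{\rm B}}C=\mathrm{id}_{\rm B}$. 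Indeed, for $W:=T\star\bar W$ (link over $\bar{\rm A}\bar{\rm B}$) one has $\tr_{\rm A}W=(\tr_{\rm A}T)\star\bar W$, and this equals $\mathrm{id}_{\rm B}$ for every $\bar W\in\mathrm{Ch}_{\bar{\rm A}\bar{\rm B}}^\leftarrow$ precisely under the latter condition, not the one you wrote. (The main-text wording is loose here; the appendix proof and the non-signalling bipartite-channel characterisation both fix the latter reading.)

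With the correct direction your ``delicate step'' disappears: the factorisation $\tr_{\rm A}T=\mathrm{id}_{\bar{\rm A}}\otimes C_\Phi$ with $\tr_{\bar{\rm B}}C_\Phi=\mathrm{id}_{\rm B}$ that you already extract from (b) \emph{is} bistochasticity, so $(\text{b})\Rightarrow(\text{c})$ is finished at that point. The paper reaches the same conclusion by invoking the comb characterisation of Ref.~\cite{CDP09}; your direct CJ computation is a fine self-contained alternative. By contrast, the extra implication you were hoping to close---from $\tr_{\rm A}T=\mathrm{id}_{\bar{\rm A}}\otimes C_\Phi$ plus forward admissibility to $\tr_{\bar{\rm A}}T=\mathrm{id}_{\rm A}\otimes\tilde T^{(2)}$---is false in general: the PR-box, regarded as a (classical) bipartite quantum channel with inputs $\bar{\rm A}{\rm B}$ and outputs ${\rm A}\bar{\rm B}$, satisfies both forward admissibility and $\tr_{\rm A}T=\mathrm{id}_{\bar{\rm A}}\otimes C$, yet $\tr_{\bar{\rm A}}T$ is not of the form $\mathrm{id}_{\rm A}\otimes(\cdot)$. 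So had you insisted on that target, the argument would have failed rather than merely been delicate.
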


Bistochastic superchannels are in one-to-one correspondence with non-signalling bipartite channels (i.e., a channel with two distinct inputs and two outputs). A bipartite channel describes a deterministic input-output operation—sometimes referred to as a ``box''—that acts on the systems accessible to two agents, Alice and Bob. In our setting, $\mathcal{T}$ corresponds to a bipartite channel with inputs $\rm \bar{A}\, B$, and outputs $\rm A\, \bar{B}$, where Alice has access to the pair $\rm (A,\, \bar{A})$ and Bob to $\rm (B,\, \bar{B})$.

An interesting consequence of this correspondence is that superchannels with no communication between pre- and post-processing operations form a \emph{strict subset} of bistochastic superchannels. This is because there exist non-signalling boxes (such as the well-known PR-box~\cite{PR1994}) that, while non-signalling, cannot be realized using shared quantum states alone. In particular, the PR-box violates information causality~\cite{P2009}: when interpreted as a superchannel, it can increase the classical capacity of certain input channels. This suggests that there might exist hidden fine-tuned information flow that is not captured by violation of DPI alone. Seen through our lens, recent breakthroughs on information causality~\cite{Miklin2021,Jain2024} show that one needs an even higher order operation (a superchannel transformation or ``super-superchannel'') to ``activate'' this hidden memory (in the jargon typically employed in such works, the super-superchannel is what defines the ``protocol'').

{\bf Connections with anti-distinguishability.---} It is natural to wonder what would happen if instead of maximisation we had minimisation in the definition of signalling power. By doing so, we obtain another signalling quantifier, which also has interesting properties.

One way to consistently replace minimisation with maximisation is to define
\begin{equation}\label{eq:P}
\mathrm{P}(N)=1-\min_{W}\tr(NW).
\end{equation}
This function retains several properties of $\mathrm{S}$: it is non-negative, convex, faithful, continuous, and satisfies the DPI. Furthermore, it is related to $\mathrm{S}$ via
\begin{equation}
\mathrm{P}(N)=[\dim(\mathcal{H}_{\rm A}\otimes \mathcal{H}_{\rm B})-1]\big(2^{\mathrm{S}(\hat{N})}-1\big),
\end{equation}
where
\begin{equation}
\hat{N}=\frac{(\dim\mathcal{H}_{\rm A})\mathrm{id}_{\rm AB}-N}{\dim(\mathcal{H}_{\rm A}\otimes \mathcal{H}_{\rm B})-1}.
\end{equation}
However, in contrast to the signalling power, $\mathrm{P}$ does not present any separation between classical and quantum, in the sense that there is no value $\gamma_{\rm Cl}$ such that $\mathrm{P}(N)\geq \gamma_{\rm Cl}$ implies non-classicality (see Observation~\ref{prop:entanglement_breaking}). In particular, there are channels that are entanglement-breaking channels maximising $\mathrm{P}(N)$. For example, the qubit channel represented by
\begin{equation}
N_{\rm EB}=\ket{0}\bra{0}\otimes \ket{0}\bra{0}+\ket{1}\bra{1}\otimes \ket{1}\bra{1},
\end{equation}
is such that $\mathrm{P}(N_{\rm EB})=1$, which is the maximum value $\rm P$ can assume. Moreover, this function is \emph{superadditive}. In particular, we numerically found the channel represented by
\begin{equation*}
\textrm{\normalsize $\tilde{N}=$}{\tt\left(
\begin{array}{cccc}\tt
 0.70836 & \tt 0.23062 &\tt -0.24562 &\tt -0.07939 \\
 \tt 0.23062 &\tt 0.29164 &\tt 0.26927 &\tt 0.24562 \\
 \tt -0.24562 &\tt 0.26927 &\tt 0.64901 &\tt 0.46428 \\
\tt -0.07939 &\tt 0.24562 &\tt 0.46428 &\tt 0.35100 \\
\end{array}
\right)},
\end{equation*}
which satisfies $\mathrm{P}(\tilde{N}) < 1$, yet $\mathrm{P}(\tilde{N}^{\otimes 2})\overset{(!)}{=}1$ (up to numerical precision). Even though we focus more on the signalling power throughout this paper, this function will be relevant when we discuss the conditions for information backflow in terms of the time-dependent decay rates in non-Markovian dynamics and the bipartite scenario with indefinite causal order as we will see below.

It is easy to verify that $\mathrm{P}(N) \leq 1$, with the maximum attained by channels that permit perfect alternative exclusion in a superdense-coding-like protocol. The scenario is analogous to standard superdense coding (see Fig.~\ref{fig:1}), except that the receiver now aims to \emph{exclude} one possible encoding option with certainty rather than correctly identifying the encoded message. In this context, $\mathrm{P}(N) = 1$ if and only if there exists a set of states, $$\{\varrho_x=(\mathcal{N}\circ\mathcal{M}_x)\otimes \mathcal{I}_{\rm \bar{B}}[\Phi^+]:x\in\mathcal{X}\},$$ that is perfectly \emph{anti-distinguishable}; that is, without any prior information one can perform a measurement that correctly excludes with probability one at least one of them as the correct one. More generally, $\mathrm{P}(N)$ equals
\begin{equation}\label{eq:Pantidist}
\mathrm{P}(N) = 1 - |\mathcal{X}| + \max \sum_{x \in \mathcal{X}} \Pr(B \neq X \mid X = x),
\end{equation}
which can be easily derived by adapting the proof of Observation~\ref{obs:2} described in the \hyperlink{AppendixB}{Appendix B}.

{\bf Application to open quantum systems.---} Within the context of open quantum system (OQS) dynamics, deviations from DPI are signatures of non-Markovianity. The fundamental idea is that, in a Markovian evolution, all information that is encoded on the OQS state at a given time is monotonically lost to the environment as they interact  \cite{BLP09}. If an OQS is initialised (here $t=0$) in a state that is uncorrelated from its environment, its evolution is described by a quantum dynamical map, namely, a family of CPTP maps $\mathcal{N}_t$ (equivalently, their CJ operators $N_t$). The absence of information backflow (with respect to $\rm S$) at a time $t\geq 0$ reads
\begin{equation}\label{eq:no_info_backflow}
-\frac{\dd}{\dd t}\mathrm{S}(N_t)\geq 0.
\end{equation}
As a consequence of Observation~\ref{prop:DPI}, this is always satisfied for CP-divisible dynamical maps, namely, those for which it is possible to define quotient CPTP maps $\mathcal{N}_{t|s}$ such that $\mathcal{N}_t=\mathcal{N}_{t|s}\circ \mathcal{N}_s$ [e.g., the semigroup dynamics generated by a Lindbladian $\mathcal{L}$, $\mathcal{N}_t=\ee^{t\mathcal{L}}$, where $\mathcal{N}_{t|s}=\ee^{(t-s)\mathcal{L}}$]. This remains valid if we exchange $\rm S$ by $\rm P$.

To illustrate how non-Markovianity can be detected through the breakdown of the inequality \eqref{eq:no_info_backflow}, we consider the evolution of a two-level system under the action of a phase-covariant noise, which includes damping and dephasing. The master equation for such a system takes the form~\cite{SKHD16}
\begin{align}
\dot{\varrho}_t&=-\ii \omega(t)[\sigma_z,\varrho_t]\nonumber \\ &\quad+\sum_{i} \frac{\gamma_i(t)}{2}\left(\sigma_i\varrho_t \sigma_i^\dagger-\frac{1}{2}\{\sigma_i^\dagger \sigma_i,\varrho_t\}\right),
\end{align}
$i\in\{+,-,z\}$, $\sigma_z=\ket{1}\bra{1}-\ket{0}\bra{0}$, and $\sigma_-^\dagger=\sigma_+=\ket{1}\bra{0}$. CP-divisibility is broken whenever any of the decay rates is negative~\cite{RHP14}. 

We derived conditions for no information backflow as quantified by the functions we introduced (see \hyperlink{AppendixC}{Appendix C} for details), obtaining
\begin{align}\label{eq:no_info_backflow_decay}
\gamma(t)&+4\gamma_z(t)\nonumber \\ &\pm \gamma(t)\exp\left(\frac{1}{4}\int_0^t[4\gamma_z(s)-\gamma(s)]\dd s\right)\geq 0,
\end{align}
$\gamma(t)=\gamma_+(t)+\gamma_-(t)$, the condition with the minus sign applying to cases where the argument of the exponential is positive, being replaced by the positivity of $\gamma(t)$ otherwise. Therefore, the signalling power differs from all other commonly used quantifiers (see Ref.~\cite{TLSM18}) by having a ``nonlocal-in-time'' term. Furthermore, if $\gamma(t)=\gamma$, a positive constant, then the function $\gamma_z(t)$ for which the left-hand-side of Eq.~\eqref{eq:no_info_backflow_decay} is null for all times coincides exactly with so-called ``eternally non-Markovian dynamics", where CP-divisibility is broken at all points of time  \cite{HCLA14}; that is, the solution for corresponding integral equation for $\gamma_z(t)$, for the equality in Eq.~\eqref{eq:no_info_backflow_decay} with the minus sign reads
\begin{equation}
\gamma_z(t)=-\frac{\gamma}{4}\tanh\left(\frac{\gamma t}{4}\right).
\end{equation}
Finally, we compare the detection of CP-divisibility breakdown of signalling power to other approaches. For this, we look at the following simple model~\cite{TM21}:
\begin{subeqnarray}
\gamma_+(t)&=&\ee^{-t/2}, \\
\gamma_-(t)&=&\ee^{-t/4}, \\
\gamma_z(t)&=&\frac{\kappa\ee^{-3t/8}}{2} \cos(2t),
\end{subeqnarray}
where $\kappa\geq 0$. CP-divisibility is broken for all $\kappa>0$ [since $\gamma_z(t)$ assumes negative values], while P-divisibility is broken only if $\sqrt{\gamma_+(t)\gamma_-(t)}+2\gamma_z(t)<0$~\cite{FGL20}, which occurs when $\kappa>1$. Information backflow, as quantified with the trace-distance, is detectable only if $\gamma_+(t)+\gamma_-(t)+4\gamma_z(t)<0$. For the signalling power we numerically verify that it significantly enhances the detection compared to the trace distance; namely, we found
\begin{align}
\kappa&\overset{\text{(TD)}}>\underbrace{\cosh(\pi/16)}_{\approx1.01934}\overset{\text{(SP)}}{>}0.44044,
\end{align}
where TD and SP stand for ``trace distance'' and ``signalling power'', respectively.

Now we go beyond the standard paradigm of dynamical maps to describe open system information dynamics. For this, we define quantum dynamical supermaps as
\begin{equation}\label{eq:super_dynamical_map}
\mathcal{T}_{t,\hspace{0.025cm}s}[\mathcal{N}](\cdot)=\tr_{\rm Env}\hsp\mathcal{U}_{t|s}(\mathcal{N}\otimes \mathcal{I}_{\rm Env}) \mathcal{U}_{s|0}(\hspace{0.05cm}\cdot\hspace{0.05cm} \otimes \varrho_{\rm Env}),
\end{equation}
where $\mathcal{U}_{t|s}$ is the unitary map acting on the total system (i.e., System+Env) and $\varrho_{\rm Env}$ is the initial state for the environment. For two given times $s$ and $t$, $0\leq s\leq t$, the map $\mathcal{T}_{t,s}$ 
transforms $N$ into $T_{t,s}\star N$ through Eq.~(\ref{eq:super_dynamical_map}). If $T_{t,s}$ is \emph{not} bistochastic, then for some channel, $\mathrm{S}(T\star N)>\mathrm{S}(N)$. This is arguably a signature of information backflow: since the total system is isolated, the only way the process described by $\mathcal{T}_{t,s}$ can improve signalling of a noise channel is through transmitting and retrieving information through the environment.

\begin{figure}
    \centering
    \includegraphics[width=0.8\linewidth]{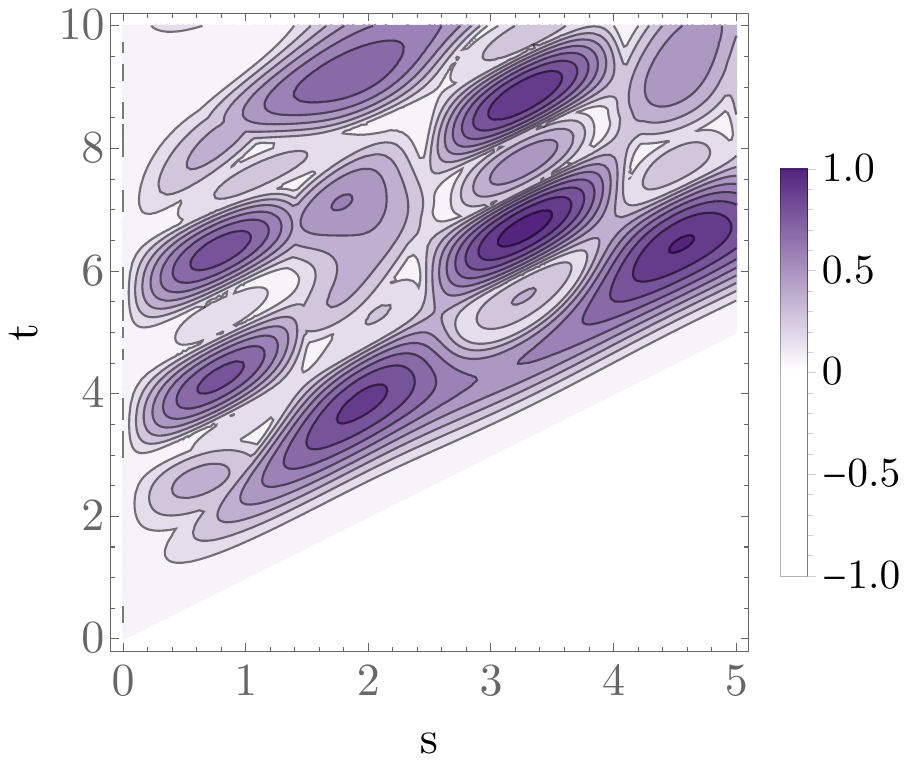}
    \caption{Quantum information backflow (positive values) in the Jaynes-Cummings model. For the initial state, we set the cavity mode in the Fock state $\ket{1}$ and the atom in the maximally mixed state.}
    \label{fig:JC_quantum_info}
\end{figure}

Next, we consider a trace-and-prepare channel as an input for the dynamical supermap,  
and then quantify information backflow based on the signalling power of the transformed channel $M_{t,s}:=T_{t,s}\star (\mathrm{id}\otimes \varpi)$. In some aspects, 
such an information backflow is more informative than the ordinary one defined in 
Eq.~\eqref{eq:no_info_backflow}. First, it detects memory even when there is no 
CP-divisibility breaking, and it also allows the detection of quantum memory. The later follows 
from Observation~\ref{prop:entanglement_breaking}. In fact, in the absence of 
quantum memory in the environment, the output channel must be entanglement-breaking 
and thus
\begin{equation}
\mathrm{S}(M_{t,\hspace{0.025cm}s})\leq \log_2(\dim\mathcal{H}_{\rm OQS}).
\end{equation}
Consequently, any violation of this bound is a signature of quantum memory, which 
we interpret as \emph{quantum information backflow}. In Fig.~\ref{fig:JC_quantum_info} 
we plot this function for the well-known Jaynes-Cummings model, describing the evolution of a two-level atom coupled to a single cavity mode (the atom plays the role of the system $\rm S$ while the mode is the environment), with Hamiltonian $H_{\rm JC}(g)=\hbar g(\sigma_+\otimes b+\sigma_-\otimes b^\dagger)$, $b$ being the standard Bosonic lowering operator of the cavity mode, $[b,\,b^\dagger]=\mathrm{id}$, $g$ is the coupling strength, and we consider the resonant case.

{\bf Quantum channels with memory.---} In general, agents can share communication resources that are not describable as an ordinary quantum channel. Indeed, in a two-agent scenario where a causal order is fixed, say that the signalling direction is from Alice to Bob, then the most general communication resource they share is described by a quantum channel with memory~\cite{KW05}.

Quantum channels with memory are in one-to-one correspondence with quantum superchannels with trivial space $\rm \bar{A}$ (see Fig.~\ref{fig:supermap}). Employing the notation of Fig.~\ref{fig:1}, a quantum channel with memory is described by the following CJ operator
\begin{equation}
\Sigma=\tr_{\rm Aux}[(\sigma^{\TT_{\rm Aux}}\otimes \mathrm{id}_{\Ao})(L\otimes \mathrm{id}_{\Ai})].
\end{equation}
Here we return to the initial labelling with inputs and outputs to avoid confusion [cf. Eq.~\eqref{eq:g_born}]. The ``auxiliary'' system $\rm Aux$ plays the role of the memory of the process.

Our key observation here is that $\tr_{\Ai\Bi}\hspace{0.05cm}\Sigma=\mathrm{id}_{\Ao}$, i.e., $\Sigma$ can also be seen as a channel from $\Ao$ to the combined system $\Ai+\Bi$. Therefore, we can define the signalling power of it accordingly,
\begin{equation}\label{eq:signalling_power_qchannel_memory}
\mathrm{S}(\Sigma)=\log_2\max_{W}\tr(\Sigma W),
\end{equation}
where $W\succcurlyeq 0$ and $\tr_{\Ao}\hspace{0.05cm}W=\mathrm{id}_{\Ai \Bi}$. The significance of this quantity is that it quantifies the causal influence Alice's encoding can have on Bob's local statistics. In fact, notice that, from Observation~\ref{obs:1}(a), $\mathrm{S}(\Sigma)$ is null if and only if $\Sigma=\eta\otimes \mathrm{id}_{\rm \Ao}$, $\eta$ being a bipartite state. This corresponds to a common-cause process  \cite{CS16}. In essence, the agents obtain correlated measurement outcomes, but one cannot causally influence events in other's lab.

Within the context of OQS dynamics, the signalling power for quantum channels with memory extends the notion of information backflow [Eq.~\eqref{eq:no_info_backflow}] for the cases where the initial system-environment state is correlated. Consider an initial non-product state $\varrho_{\rm SEnv}$. One encodes some information onto the OQS, this action being described by the local application of a CPTP map. Suppose one encodes a letter $x$, with a pre-selected CPTP map $\mathcal{M}_x$ locally upon the OQS. At a later time $t$, the reduced OQS state reads
\begin{equation}
\mathcal{S}_t[\mathcal{M}_x]=\tr_{\rm Env}\hsp \mathcal{U}_t(\mathcal{M}_x\otimes \mathcal{I}_{\rm Env})(\varrho_{\rm SEnv})
\end{equation}
The signalling power of $\mathcal{S}_t$ [computed via Eq.~\eqref{eq:signalling_power_qchannel_memory}] quantifies the causal influence of the initial encoding on the OQS state at times $t\geq 0$. Consequently, its non-monotonic behaviour can also be seen as a signature of information backflow.

{\bf {Quantum causal networks and beyond.}---} The above reasoning can be applied to quantify information flow between two groups in multi-agent scenarios. In those cases, the corresponding process matrix represents a family of channels, each of which connects the outputs of one group to the inputs of the other (see \hyperlink{AppendixD}{Appendix D} for a discussion). Each representation defines the communication resources between the two chosen groups. The signalling power of the corresponding channel quantifies the information flow between them. This applies to both process having definitive causal order defined through a directed acyclic graph and processes with indefinite causal order. 

Turning our attention back to the two-agent scenario, we can define the signalling power for the generic process matrix (i.e., without necessarily having a defined causal order) analogously to Eq.~\eqref{eq:signalling_power_qchannel_memory}; i.e., looking at it as a quantum channel $\Ao\Bo\to \Ai\Bi$. By doing so, the quantity we obtain coincides (up to the logarithm) with the ``signalling robustness'' introduced in Ref.~\cite{MBC22} in a resource-theoretic approach.

An interesting consequence of our results in this scenario is the possibility of understanding in an operational and intuitive way that, although quantum theory allows for processes with indefinite causal order, this additional resource does not permit causal loops. This fact was already noticed in Ref.~\cite{OCB12}, where the concept of indefinite causal order was formalised. Employing our function $\rm P$ [Eq.~\eqref{eq:P}] and its operational interpretation [Eq.~\eqref{eq:Pantidist}], we can be easily show that if two agents share a valid process matrix then their average exclusion probabilities satisfy
\begin{align}\label{eq:nocausalloops}
&\max\sum_{x\in \mathcal{X}}\mathrm{Pr}(B\neq X|X=x)\nonumber \\ &\quad+\max\sum_{y\in \mathcal{Y}}\mathrm{Pr}(A\neq Y|Y=y)\leq |\mathcal{X}|+|\mathcal{Y}|-1.
\end{align}
Each term corresponds to the average probability of correctly \emph{excluding} at least one alternative in a superdense-coding-like protocol, from Alice to Bob and from Bob to Alice, respectively (notation is the same as in Observation~\ref{obs:2}, $a\leftrightarrow b$ and $x\leftrightarrow y$). In other words, instead of trying to guess which alternative was input, as in standard superdense coding, the agents have the more modest goal of attempting to exclude at least one alternative. The sum is upper-bounded by one, meaning that Alice and Bob cannot, on average, win such a game simultaneously. In particular, if any of them—say, Bob—can conclusively exclude Alice's inputs, then the marginal channel connecting his output to her input is a trace-and-prepare channel, thus not allowing for any signalling. Therefore, the information he gets cannot be sent back to Alice (which could otherwise be used in her encoding) in each round of the game, thus avoiding a paradoxical causal loop.

Another interesting observation is that if we consider $X$ and $Y$ as binary variables and divide both sides of the inequality~\eqref{eq:nocausalloops} by two, the left-hand side coincides with the inequality used in Ref.~\cite{OCB12} to demonstrate indefinite causal order. The right-hand side equals $3/2$, which is twice as large as the classical bound (i.e., assuming a definite causal order) and strictly larger than the quantum bound of $(2+\sqrt{2})/4$~\cite{OCB12}, exactly mimicking the behaviour of statistical correlations in the simplest Bell test~\cite{Brunner2014}. This suggests that the notion of antidistinguishability may be useful for characterising extreme quantum correlations arising from indefinite causal order under physical principles such as absence of causal loops.

\section*{Discussion}

By defining the signalling power of channels and extending it to high-order maps, we achieved a quantitative description of the information flow (signalling) between classical agents in general quantum processes. The concept of information backflow, identified through the breakdown of DPI, underwent significant expansion in various directions. Notably, we extended it to cases involving initial system-environment correlations. Furthermore, our approach enables the discrimination between classical and quantum backflow of information, while also establishes connections with quantum memory effects~\cite{RBL18,GC21}.

Our work paves the way for numerous potential avenues of exploration. To highlight a few, the local detection of system-environment properties in OQS dynamics, such as information (back)flow and quantum memory, could be enhanced with optimal schemes in experimental setups. On one hand, having full knowledge of the process (e.g., via process tomography) would enable the extraction of all relevant properties, while on the other, our results demonstrate that with the assistance of entanglement, a single measurement strategy per time is sufficient. Determining optimal strategies for extracting such properties in physically relevant setups presents an interesting and highly relevant open problem.

Throughout the paper, we discuss some potential connections between our approach to information flow and foundational concepts in quantum theory. First, we highlight the striking analogy between the gap separating bistochastic quantum superchannels and those that prohibit information flow between pre- and post-processing, and the violation of the information causality principle by post-quantum correlations. Second, in a similar vein, our results show that antidistinguishability plays a meaningful role in the formulation of physical principles for quantum mechanics with indefinite causal order: in the simplest scenario, it fully characterises the absence of causal loops. These observations suggest that recasting the information causality principle within this framework can yield novel insights in both directions, and that antidistinguishability is a powerful concept warranting further attention.

\section*{Acknowledgments}

We thank Roope Uola and Kimmo Luoma for discussions and Dariusz Chruściński and Simon Milz for key remarks. This work was supported by the Deutsche Forschungsgemeinschaft (DFG, German
Research Foundation, project numbers 447948357 and 440958198), the Sino-German 
Center for Research Promotion (Project M-0294), and the German Ministry of 
Education and Research (Project QuKuK, BMBF Grant No. 16KIS1618K).
Z.-P. X. acknowledges support from National Natural Science Foundation of China (Grant No. 12305007), Anhui Provincial Natural Science Foundation (Grant No. 2308085QA29). LSVS acknowledges support from the House of Young Talents of the University of Siegen.

\appendix

\section*{Appendix A. Choi-Jamio{\l}kowski isomorphism and link product}\hypertarget{AppendixA}{ }

A quantum channel is the most general deterministic transformation between density operators that is compatible with the probabilistic structure of quantum theory. Mathematically, it can be represented as a complete-positive (CP) and trace-preserving (TP) linear map $\mathcal{N}:\mathrm{L}(\mathcal{H}_{\rm A})\to \mathrm{L}(\mathcal{H}_{\rm B})$. Alternatively, one can represent a channel as an operator $N\in\mathrm{L}(\mathcal{H}_{\rm A}\otimes \mathcal{H}_{\rm B})$ defined via
\begin{equation}\label{eq:CJ1}
N\equiv \mathrm{CJ}(\mathcal{N}):=\sum_{ij}\ket{i}\bra{j}\otimes \mathcal{N}(\ket{i}\bra{j}).
\end{equation}
The mapping $\mathcal{N}\mapsto N$ defines the Choi-Jamio{\l}kowski (CJ) isomorphism \cite{Choi75,J72}. A simple calculation reveals that its inverse equals
\begin{equation}\label{eq:CJ2}
\mathcal{N}(\cdot)=\tr_{\rm A}[(\cdot^\TT\otimes \mathrm{id}_{\rm B})N].
\end{equation}

\begin{rem}
Both Eqs. (\ref{eq:CJ1}) and (\ref{eq:CJ2}) are computed with respect to a fixed orthonormal basis in the input space. We consider this standard basis-dependent definition of the CJ isomorphism. Henceforth, all CJ operators, transpositions, partial transpositions and so on are computed with respect to fixed orthonormal bases, even though we rarely make explicit reference to it.
\end{rem}

\begin{lem}
A linear map $\mathcal{N}:\mathrm{L}(\mathcal{H}_{\rm A})\to \mathrm{L}(\mathcal{H}_{\rm B})$ is CP if and only if its Choi operator $N$ is positive semidefinite, and TP if and only if $\tr_{\rm B}\hspace{0.05cm}N=\mathrm{id}_{\rm A}$.
\end{lem}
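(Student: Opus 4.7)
The plan is to prove each equivalence separately, exploiting that the CJ map $\mathcal{N}\mapsto N$ is a bijection. For the CP part, my starting observation is that the Choi operator admits the clean rewriting $N=(\mathrm{id}_{\rm A}\otimes \mathcal{N})(\ket{\Phi}\bra{\Phi})$ with $\ket{\Phi}=\sum_i\ket{i}\otimes\ket{i}$ the unnormalised maximally entangled vector in the fixed basis alluded to in the Remark. One direction is then immediate: if $\mathcal{N}$ is CP, then $\mathrm{id}_{\rm A}\otimes\mathcal{N}$ preserves positivity, so $N\succcurlyeq 0$.

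For the converse direction I would produce a Kraus decomposition from the spectral decomposition of $N$. Writing $N=\sum_k \ket{v_k}\bra{v_k}$ with $\ket{v_k}\in\mathcal{H}_{\rm A}\otimes\mathcal{H}_{\rm B}$, I would identify each $\ket{v_k}$ with a matrix $K_k:\mathcal{H}_{\rm A}\to\mathcal{H}_{\rm B}$ via $\ket{v_k}=\sum_{i,j}(K_k)_{ji}\ket{i}\otimes\ket{j}$, and then verify by a direct matrix-element computation that the candidate map $\mathcal{N}'(\rho)=\sum_k K_k\rho K_k^\dagger$ has Choi operator equal to $N$. Since the CJ map is injective, this forces $\mathcal{N}=\mathcal{N}'$, which is manifestly CP by Kraus form. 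The only delicate bookkeeping in the whole lemma lives here, in getting the row/column convention right so that the reconstructed map is $\mathcal{N}$ rather than its transpose; this is the one step where I expect to have to be careful.

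For the TP part the computation is direct. Taking the partial trace over $\mathcal{H}_{\rm B}$ of the defining formula \eqref{eq:CJ1} yields
\begin{equation}
\tr_{\rm B}\hspace{0.05cm}N=\sum_{ij}\tr[\mathcal{N}(\ket{i}\bra{j})]\,\ket{i}\bra{j}.
\end{equation}
If $\mathcal{N}$ is TP, then $\tr[\mathcal{N}(\ket{i}\bra{j})]=\tr(\ket{i}\bra{j})=\delta_{ij}$, so the right-hand side collapses to $\mathrm{id}_{\rm A}$. Conversely, matching matrix elements in the identity $\tr_{\rm B}\hspace{0.05cm}N=\mathrm{id}_{\rm A}$ forces $\tr[\mathcal{N}(\ket{i}\bra{j})]=\delta_{ij}=\tr(\ket{i}\bra{j})$ for every $i,j$; extending by linearity gives $\tr\mathcal{N}(\rho)=\tr(\rho)$ for all operators $\rho$, which is trace preservation.

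Overall there is no real obstacle; the proof is a structural unpacking of the CJ isomorphism together with Kraus's theorem packaged in reverse. The main care needed is the indexing convention in the Kraus reconstruction, and secondarily to keep the distinction between the map acting on $\mathcal{H}_{\rm A}$ vs.\ on $\mathcal{H}_{\rm B}$ consistent with the basis-dependent choice fixed in the Remark, so that the transposition and partial trace land on the intended factors.
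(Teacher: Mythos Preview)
Your argument is correct and is the standard proof of Choi's theorem together with the elementary TP computation. Note, however, that the paper does not actually supply a proof of this lemma: it is stated in Appendix~A as a known structural fact about the CJ isomorphism and then used without further justification. So there is nothing to compare against; your write-up simply fills in a proof the authors chose to omit. The one place to be careful, as you already flagged, is the indexing convention $(K_k)_{ji}$ versus $(K_k)_{ij}$ so that the reconstructed map matches the paper's convention in Eq.~\eqref{eq:CJ1} rather than its transpose.
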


The CJ isomorphism thus provide a representation for quantum channels that are particularly powerful if we are doing semidefinite linear programming (SDP) or if we just want to find out whether a map is CP or not.

From now on, in particular when proving our technical results, we intensively use so-called link product introduced in Ref.~\cite{CDP09}, where one can also find a detailed presentation on the significance and main properties of that product. Here, we just summarise the main properties we will use throughout.

\begin{defn}
Let $\mathbf{A}$ and $\mathbf{B}$ be two finite set of indices labelling Hilbert spaces $\mathcal{H}_i$, $i\in\mathbf{A}\cup \mathbf{B}$. The link product between $U\in \mathrm{L}(\otimes_{i\in \mathbf{A}} \mathcal{H}_i)$ and $V\in \mathrm{L}(\otimes_{i\in \mathbf{B}} \mathcal{H}_i)$ is an operator $U\star V\in \mathrm{L}(\otimes_{i\in \mathbf{A}\cup\mathbf{B}\setminus\mathbf{A}\cap\mathbf{B}}\mathcal{H}_{i})$ defined as
\begin{equation}\label{eq:link_prod_defn}
U\star V=\tr_{\mathbf{A}\cap\mathbf{B}}[(U^{\TT_{\mathbf{A}\cap\mathbf{B}}}\otimes \mathrm{id}_{\mathbf{B}\setminus \mathbf{A}})(V\otimes \mathrm{id}_{\mathbf{A}\setminus \mathbf{B}})],
\end{equation}
where $\TT_{\mathbf{A}\cap\mathbf{B}}$ means partial transposition over Hilbert spaces labelled by $\mathbf{A}\cap\mathbf{B}$.
\end{defn}

The main motivation behind link product is to translate composition of quantum transformations to the CJ picture. In fact, notice that the Choi operator of the composition of two linear maps $\mathcal{N}$ and $\mathcal{M}$ equals
\begin{equation}
\mathrm{CJ}(\mathcal{M}\circ\mathcal{N})=\mathrm{CJ}(\mathcal{M})\star \mathrm{CJ}(\mathcal{N}).
\end{equation}
In particular, the action of a quantum channel onto a quantum state can be written as
\begin{equation}
\mathcal{N}(\varrho)=N\star\varrho.
\end{equation}
In the following lemma states the main properties of the link product \cite{CDP09}.

\begin{lem}
Let $\bf A$ and $\bf B$ and $\bf C$ be finite sets of indices labelling Hilbert spaces $\mathcal{H}_i$, $i\in\mathbf{A}\cup \mathbf{B}\cup \mathbf{C}$ and let $U\in\mathrm{L}(\mathcal{H}_{\bf A})$, $V\in\mathrm{L}(\mathcal{H}_{\bf B})$ and $W\in\mathrm{L}(\mathcal{H}_{\bf C})$. The link product has the following properties:
\begin{itemize}
    \item[(1)] if $U$ and $V$ are Hermitian then $U\star V$ is also Hermitian;
    \item[(2)] if $U$ and $V$ are positive semidefinite then $U\star V$ is also positive semidefinite;
    \item[(3)] $U\star V=U_{\rm SWAP}(V\star U)U_{\rm SWAP}^\dagger$, where $U_{\rm SWAP}$ is the unitary \textsc{swap} between the Hilbert spaces labbeled as $\mathbf{A}\setminus\mathbf{B}$ and $\mathbf{B}\setminus\mathbf{A}$;
    \item[(4)] if $\mathbf{A}\cap \mathbf{B}=\emptyset$ then $U\star V=U\otimes V$;
    \item[(5)] if $\mathbf{A}=\mathbf{B}$ then $U\star V=\tr(U^\TT V)$.
    \item[(6)] if $\mathbf{A}\cap\mathbf{B}\cap \mathbf{C}=\emptyset$ then $U\star V\star W=(U\star V)\star W=U\star(V\star W)$.
\end{itemize}
\end{lem}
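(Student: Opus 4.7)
My plan is to reduce all six items to a single compact identity. Setting $\mathbf{C}:=\mathbf{A}\cap\mathbf{B}$, $\mathbf{A}':=\mathbf{A}\setminus\mathbf{B}$, $\mathbf{B}':=\mathbf{B}\setminus\mathbf{A}$, and $U_{ij}:=\bra{i}_{\mathbf{C}}U\ket{j}_{\mathbf{C}}$, $V_{ij}:=\bra{i}_{\mathbf{C}}V\ket{j}_{\mathbf{C}}$ (operators on $\mathcal{H}_{\mathbf{A}'}$ and $\mathcal{H}_{\mathbf{B}'}$ respectively), with $\{\ket{i}_{\mathbf{C}}\}$ the fixed basis used to define $\TT_{\mathbf{C}}$, a direct unpacking of the partial transpose and partial trace in Eq.~\eqref{eq:link_prod_defn} gives
\[
U\star V \;=\; \sum_{ij}U_{ij}\otimes V_{ij} \;=\; \bra{\Phi}_{\mathbf{C}\mathbf{C}'}\bigl(U\otimes V\bigr)\ket{\Phi}_{\mathbf{C}\mathbf{C}'},
\]
where $\ket{\Phi}:=\sum_i\ket{i}_{\mathbf{C}}\ket{i}_{\mathbf{C}'}$ and $\mathbf{C}'$ is an auxiliary copy of $\mathbf{C}$ used to absorb the contraction. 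This ``sandwich'' form is the workhorse: most of the subsequent items become immediate.

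From this identity, (1)--(5) follow almost mechanically. For (1), $(U\star V)^\dagger=\sum_{ij}U_{ij}^\dagger\otimes V_{ij}^\dagger$, and Hermiticity of $U$ and $V$ yields $U_{ij}^\dagger=U_{ji}$ and $V_{ij}^\dagger=V_{ji}$; relabeling $i\leftrightarrow j$ returns $U\star V$. For (2), the sandwich form gives $\bra{\psi}(U\star V)\ket{\psi}=(\bra{\psi}\otimes\bra{\Phi})(U\otimes V)(\ket{\psi}\otimes\ket{\Phi})\geq 0$ for every $\ket{\psi}\in\mathcal{H}_{\mathbf{A}'}\otimes\mathcal{H}_{\mathbf{B}'}$ whenever $U,V\succcurlyeq 0$. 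For (3), the analogous expression $V\star U=\sum_{ij}V_{ij}\otimes U_{ij}$ differs from $U\star V$ only by the ordering of the tensor factors on $\mathbf{A}'$ and $\mathbf{B}'$, which is exactly conjugation by $U_{\mathrm{SWAP}}$. Items (4) and (5) are the two degenerate cases: if $\mathbf{C}=\emptyset$ the partial transpose and partial trace in Eq.~\eqref{eq:link_prod_defn} act trivially and the commuting factors $U\otimes I$ and $V\otimes I$ collapse to $U\otimes V$; if $\mathbf{A}=\mathbf{B}$ then $\mathbf{A}'$ is empty and the sum $\sum_{ij}U_{ij}V_{ij}$ is exactly $\tr(U^{\TT}V)$ by a one-line matrix-element calculation.

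The main obstacle is (6). The hypothesis $\mathbf{A}\cap\mathbf{B}\cap\mathbf{C}=\emptyset$ is doing real work: it guarantees that every shared index is shared by exactly two of the three operators, so that contractions on disjoint subsystems can be performed independently. Concretely, I would decompose the index sets using the pairwise-shared parts $\mathbf{P}:=\mathbf{A}\cap\mathbf{B}$, $\mathbf{Q}:=\mathbf{A}\cap\mathbf{C}$, $\mathbf{R}:=\mathbf{B}\cap\mathbf{C}$ (mutually disjoint under the hypothesis) together with the strictly-local parts $\mathbf{A}^{\ast},\mathbf{B}^{\ast},\mathbf{C}^{\ast}$, and iterate the sandwich formula: both $(U\star V)\star W$ and $U\star(V\star W)$ reduce to the same triple contraction
\[
\bra{\Phi}_{\mathbf{P}\mathbf{P}'}\bra{\Phi}_{\mathbf{Q}\mathbf{Q}'}\bra{\Phi}_{\mathbf{R}\mathbf{R}'}\bigl(U\otimes V\otimes W\bigr)\ket{\Phi}_{\mathbf{P}\mathbf{P}'}\ket{\Phi}_{\mathbf{Q}\mathbf{Q}'}\ket{\Phi}_{\mathbf{R}\mathbf{R}'}
\]
on $\mathcal{H}_{\mathbf{A}^{\ast}}\otimes\mathcal{H}_{\mathbf{B}^{\ast}}\otimes\mathcal{H}_{\mathbf{C}^{\ast}}$. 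The delicate step is the bookkeeping: one must verify that the partial transpositions inserted at each link-product step act on disjoint subsystems (so they commute), and that the three $\ket{\Phi}$ contractions can be taken in either order without interference. The triple-intersection condition is precisely what rules out a potentially ambiguous contraction on a subsystem shared by all three operators.
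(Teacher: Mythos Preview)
Your argument is correct. Note, however, that the paper does not actually prove this lemma: it simply states the properties and defers to the original reference~\cite{CDP09}. Your maximally-entangled ``sandwich'' identity
\[
U\star V=\sum_{ij}U_{ij}\otimes V_{ij}=\bra{\Phi}(U\otimes V)\ket{\Phi}
\]
is precisely the standard device used in that reference, and your derivations of (1)--(5) from it are clean. For (6), your observation that the pairwise overlaps $\mathbf{P},\mathbf{Q},\mathbf{R}$ are mutually disjoint under the hypothesis $\mathbf{A}\cap\mathbf{B}\cap\mathbf{C}=\emptyset$ is exactly the point, and the reduction of both bracketings to the common triple contraction is correct; the nested $\ket{\Phi}$-sandwiches act on disjoint registers and therefore commute.

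One cosmetic issue worth fixing: at the very start you set $\mathbf{C}:=\mathbf{A}\cap\mathbf{B}$, which collides with the symbol $\mathbf{C}$ already reserved in the lemma for the third index set appearing in item~(6). Since you switch to $\mathbf{P},\mathbf{Q},\mathbf{R}$ for the associativity argument anyway, it is cleaner to use $\mathbf{P}:=\mathbf{A}\cap\mathbf{B}$ from the outset and avoid the clash.
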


\begin{rem}
Hereafter, $N\in\mathrm{Ch}_{\rm AB}^\rightarrow$ means that $N\in\mathrm{L}(\mathcal{H}_{\rm A}\otimes\mathcal{H}_{\rm B})$, $N\pos 0$, and $\tr_{\rm B}\hspace{0.05cm}N=\mathrm{id}_{\rm A}$. Similarly, $N\in\mathrm{Ch}_{\rm AB}^\leftarrow$ if and only if $N\in\mathrm{L}(\mathcal{H}_{\rm A}\otimes\mathcal{H}_{\rm B})$, $N\pos 0$, and $\tr_{\rm A}\hspace{0.05cm}N=\mathrm{id}_{\rm B}$.
\end{rem}

\section*{Appendix B. Properties of the signaling power}\hypertarget{AppendixB}{ }

This appendix contains demonstrations of some of the results established in the main text. For the sake of clarity, some Observations are divided into smaller Propositions.

\begin{propsm}\label{prop:1}
$\mathrm{S}(N)=0$ if and only if $N=\mathrm{id}_{\rm A}\otimes \varpi$, $\varpi\in\mathrm{D}(\mathcal{H}_{\rm B})$.
\end{propsm}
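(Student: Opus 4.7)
The proof splits into two directions. For the reverse implication, the calculation is direct: if $N=\mathrm{id}_{\rm A}\otimes\varpi$, then for any admissible $W$ (i.e., $W\pos 0$ and $\tr_{\rm A}W=\mathrm{id}_{\rm B}$) one has $\tr(NW)=\tr[\varpi\,\tr_{\rm A}W]=\tr\varpi=1$, so the maximum in Eq.~\eqref{eq:SP} equals $1$, giving $\mathrm{S}(N)=0$.

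For the forward direction, I would use a first-order optimality (perturbation) argument. Setting $d_{\rm A}:=\dim\mathcal{H}_{\rm A}$, the starting point is that the flat choice $W_0:=\mathrm{id}_{\rm A}\otimes\mathrm{id}_{\rm B}/d_{\rm A}$ is always feasible and achieves $\tr(NW_0)=\tr(N)/d_{\rm A}=1$, using $\tr_{\rm B}N=\mathrm{id}_{\rm A}$ to conclude $\tr N=d_{\rm A}$. Hence $\mathrm{S}(N)\geq 0$ in general, and the assumption $\mathrm{S}(N)=0$ makes $W_0$ an optimizer. Because $W_0\succ 0$ lies in the interior of the positive-semidefinite cone, any Hermitian perturbation $\Delta$ satisfying $\tr_{\rm A}\Delta=0$ keeps $W_0+\epsilon\Delta$ feasible for all sufficiently small $|\epsilon|$; optimality then forces $\tr(N\Delta)\leq 0$, and replacing $\Delta\to-\Delta$ upgrades this to $\tr(N\Delta)=0$ for every such $\Delta$.

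The remaining step is linear algebra. The Hilbert-Schmidt adjoint of the partial trace (restricted to Hermitian operators) is $Y\mapsto\mathrm{id}_{\rm A}\otimes Y$, so the orthogonal complement of $\ker(\tr_{\rm A})$ is $\{\mathrm{id}_{\rm A}\otimes Y:Y=Y^\dagger\}$. Therefore $N=\mathrm{id}_{\rm A}\otimes Y$ for some Hermitian $Y$; positivity $N\pos 0$ yields $Y\pos 0$, and the channel normalization $\tr_{\rm B}N=\mathrm{id}_{\rm A}$ gives $\tr Y=1$, so $Y$ is a density matrix, say $\varpi$, as claimed.

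I do not anticipate a serious obstacle beyond correctly identifying the orthogonal complement of $\ker(\tr_{\rm A})$; this adjoint identification is what converts a local optimality condition into the rigid product form for $N$. As an alternative, the same conclusion can be reached via SDP duality: the dual reads $\min\tr X$ subject to $\mathrm{id}_{\rm A}\otimes X\pos N$, and taking the partial trace of this constraint yields $\tr X\geq 1$ with equality iff $X=\tr_{\rm A}N/d_{\rm A}=:\varpi$; the feasibility requirement $\mathrm{id}_{\rm A}\otimes\varpi-N\pos 0$ then forces equality since the left-hand side has vanishing trace.
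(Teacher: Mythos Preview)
Your proof is correct and, for the nontrivial direction, follows essentially the same perturbation argument as the paper: both perturb the interior feasible point $W_0=\mathrm{id}_{\rm AB}/d_{\rm A}$ and exploit optimality, the only difference being that the paper picks the single perturbation $T=N-\mathrm{id}_{\rm A}\otimes\varpi$ (with $\varpi=d_{\rm A}^{-1}\tr_{\rm A}N$) and reads off $\tr T^2=0$ directly, whereas you state the first-order condition for all admissible $\Delta$ and then identify the orthogonal complement of $\ker\tr_{\rm A}$. Your SDP-duality alternative (zero-trace of the positive operator $\mathrm{id}_{\rm A}\otimes X-N$ forces it to vanish) is a genuinely different and arguably cleaner route not present in the paper.
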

\begin{proof}
If $N=\mathrm{id}_{\rm A}\otimes \varpi$, then $\tr(NW)=\tr\varpi=1$ for all $W\in\mathrm{Ch}_{\rm AB}^\leftarrow$, and hence $\mathrm{S}(N)=0$. To prove the converse, first we define $T=N-\mathrm{id}_{\rm A}\otimes \varpi$,
$\varpi=(\dim \mathcal{H}_{\rm A})^{-1}\tr_{\rm A}\hspace{0.05cm}N\in\mathrm{D}(\mathcal{H}_{\rm B})$. $T$ is a self-adjoint operator with null marginals, i.e., $T=T^\dagger$, $\tr_{\rm A}T=0$ and $\tr_{\rm B}T=0$. Next, we define
\begin{equation*}
W_\lambda=\frac{\mathrm{id}_{\rm AB}}{\dim \mathcal{H}_{\rm A}}+\lambda T.
\end{equation*}
Since $T$ has null marginals, $\tr_{\rm A}W_{\lambda}=\mathrm{id}_{\rm B}$. Moreover, there exists $\varepsilon>0$ such that $\lambda\in[-\varepsilon,\varepsilon]$ implies $W_{\lambda}\succcurlyeq 0$. Therefore, for those values of $\lambda$, $W_{\lambda}\in \mathrm{Ch}_{\rm AB}^\leftarrow$. Hence,
\begin{equation*}
2^{\mathrm{S}(N)}=\max_{W\in\mathrm{Ch}_{\rm AB}^\leftarrow} \tr(NW)\geq \tr(NW_\lambda)=1+\lambda \tr T^2.
\end{equation*}
Finally, $\mathrm{S}(N)=0$ implies $\lambda \tr T^2\leq 0$ for all $\lambda\in[-\varepsilon,\varepsilon]$, consequently $T=0$; that is, $N=\mathrm{id}_{\rm A}\otimes \varpi$.
\end{proof}

\begin{propsm}\label{prop:2}
$\rm S$ is continuous with respect to the diamond norm.
\end{propsm}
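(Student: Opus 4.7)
The plan is to derive a Lipschitz bound for $\mathrm{S}$ with respect to the diamond norm, which implies continuity in particular. Set $g(N) := 2^{\mathrm{S}(N)} = \max_{W \in \mathrm{Ch}_{\rm AB}^\leftarrow} \tr(NW)$. Because $g$ is the pointwise supremum of a family of linear functionals in $N$, it is convex and satisfies the standard ``maximum of linear functions'' bound
\begin{equation*}
|g(N_1) - g(N_2)| \leq \sup_{W \in \mathrm{Ch}_{\rm AB}^\leftarrow} |\tr((N_1-N_2)W)|,
\end{equation*}
which is what I would estimate.

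First I would bound $\|W\|_\infty$ uniformly over the feasible set. Since $W \pos 0$ and $\tr_{\rm A} W = \mathrm{id}_{\rm B}$, we have $\tr W = d_{\rm B}$ (writing $d_{\rm X} := \dim \mathcal{H}_{\rm X}$), so $\|W\|_\infty \leq \|W\|_1 = d_{\rm B}$. Hölder's inequality then gives
\begin{equation*}
|\tr((N_1-N_2)W)| \leq d_{\rm B}\, \|N_1 - N_2\|_1.
\end{equation*}

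Next I would convert the trace-norm distance of Choi operators into a diamond-norm distance of the channels. From the definition in Eq.~\eqref{eq:CJ1}, the Choi operator of $\Phi := \mathcal{N}_1 - \mathcal{N}_2$ satisfies
\begin{equation*}
N_1 - N_2 = d_{\rm A}\,(\mathrm{id}\otimes \Phi)\bigl(|\Phi^+\rangle\langle\Phi^+|\bigr),
\end{equation*}
where $|\Phi^+\rangle$ is the maximally entangled state on $\mathcal{H}_{\rm A}\otimes\mathcal{H}_{\rm A}$. Since this is the action of $\mathrm{id}\otimes\Phi$ on a valid density matrix, the definition of $\|\cdot\|_\diamond$ gives
\begin{equation*}
\|N_1 - N_2\|_1 \leq d_{\rm A}\, \|\mathcal{N}_1 - \mathcal{N}_2\|_\diamond,
\end{equation*}
and combining with the previous display yields $|g(N_1) - g(N_2)| \leq d_{\rm A} d_{\rm B}\, \|\mathcal{N}_1 - \mathcal{N}_2\|_\diamond$.

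Finally, $g(N) \geq 1$ for every channel $N$, which is seen by taking $W = \mathrm{id}_{\rm AB}/d_{\rm A} \in \mathrm{Ch}_{\rm AB}^\leftarrow$ and computing $\tr(NW) = 1$. Since $\log_2$ is $(1/\ln 2)$-Lipschitz on $[1,\infty)$,
\begin{equation*}
|\mathrm{S}(N_1) - \mathrm{S}(N_2)| \leq \frac{d_{\rm A} d_{\rm B}}{\ln 2}\, \|\mathcal{N}_1 - \mathcal{N}_2\|_\diamond,
\end{equation*}
establishing uniform Lipschitz continuity of $\mathrm{S}$ in the diamond norm. There is no serious obstacle: the only step deserving care is invoking the standard relation between diamond norm and trace norm of Choi operators, and everything else is Hölder plus the logarithmic Lipschitz bound.
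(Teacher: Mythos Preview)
Your proof is correct. Both you and the paper reduce to a Lipschitz bound on $g(N)=2^{\mathrm{S}(N)}$ and then use that $\log$ is Lipschitz on $[1,\infty)$, but the way the two of you control $|g(N_1)-g(N_2)|$ is genuinely different. The paper plugs the optimal $W$ into the superdense-coding decomposition built in Proposition~\ref{prop:3}, writing $\tr[(N-M)W^\star]$ as a sum of $|\mathcal{X}|=d_{\rm A}^2$ terms of the form $\tr\{E_x[(\mathcal{N}-\mathcal{M})\otimes\mathcal{I}](\varrho_x)\}$, each of which is dominated by the diamond norm; this yields the constant $|\mathcal{X}|$. You instead go straight through H\"older, using only that $\|W\|_\infty\leq\tr W=d_{\rm B}$ on the feasible set, and then the standard inequality $\|J_\Phi\|_1\leq d_{\rm A}\|\Phi\|_\diamond$ between the Choi trace norm and the diamond norm. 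Your route is more elementary and entirely self-contained (it does not rely on the operational construction of Proposition~\ref{prop:3}), at the price of a slightly different Lipschitz constant $d_{\rm A}d_{\rm B}/\ln 2$ versus the paper's $d_{\rm A}^2$; neither constant is uniformly better, and for the purpose of establishing continuity both are perfectly adequate.
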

\begin{proof}
Since $x,y\geq 1$ implies $|\log x-\log y|\leq |x-y|$, we have
\begin{align*}
|\mathrm{S}(N)-\mathrm{S}(M)|\leq\left|\max_W \tr(NW)-\max_{\hat{W}}\tr(M\hat{W})\right|,
\end{align*}
where the positivity of $\rm S$ was used. Next, we assume, without loss of generality,
\begin{equation*}
\max_W \tr(NW)\geq\max_{\hat{W}}\tr(M\hat{W}),
\end{equation*}
so that  
\begin{align*}
&\left|\max_W \tr(NW)-\max_{\hat{W}}\tr(M\hat{W})\right|&\nonumber \\ &\leq \left|\max_W \tr(NW)-\tr\left(M\argmax_W[\tr(NW)]\right)\right| \nonumber \\ &=\left|\tr\left\{(N-M)\argmax_W[\tr(NW)]\right\}\right|.
\end{align*}
Following the same construction for the proof of Proposition \ref{prop:3} bellow, we can define a set of bipartite states $\{\varrho_x\}_{x\in\mathcal{X}}\subset \mathrm{D}(\mathcal{H}_{\rm A}\otimes \mathcal{H}_{\rm \bar{A}})$, $\mathcal{H}_{\rm \bar{A}}\simeq \mathcal{H}_{\rm A}$, and a POVM $\{E_x\}_{x\in\mathcal{X}}$ such that
\begin{align*}
&\left|\tr\left\{(N-M)\argmax_W[\tr(NW)]\right\}\right|\nonumber \\&\leq \left|\sum_{x\in\mathcal{X}}\tr\big\{E_x[(\mathcal{N}-\mathcal{M})\otimes \mathcal{I}_{\rm \bar{A}}](\varrho_x)\big\}\right|\nonumber \\ &\leq\sum_x\big|\tr\big\{E_x[(\mathcal{N}-\mathcal{M})\otimes \mathcal{I}_{\rm \bar{A}}](\varrho_x)\big\}\big| \nonumber \\ &\leq |\mathcal{X}|\max_{\varrho\in \mathrm{D}(\mathcal{H}_{\rm A}\otimes \mathcal{H}_{\rm \bar{A}}) }\max_{0\leq E\leq \mathrm{id}}\tr\big\{E[(\mathcal{N}-\mathcal{M})\otimes \mathcal{I}_{\rm \bar{A}}](\varrho)\big\} \nonumber \\ &=|\mathcal{X}|\hspace{0.05cm}\|\mathcal{N}-\mathcal{M}\|_{\diamond},
\end{align*}
that is, $|\mathrm{S}(N)-\mathrm{S}(M)|\leq |\mathcal{X}|\hspace{0.05cm}\|\mathcal{N}-\mathcal{M}\|_{\diamond}$.
\end{proof}

\begin{propsm}\label{prop:3}
For channel with CJ operator $N$, used for communication in the superdense coding protocol assisted by a maximally entangled state, one has
\begin{equation}
2^{\mathrm{S}(N)}=\max\sum_{x\in \mathcal{X}}\mathrm{Pr}(B=X|X=x),
\end{equation}
where $|\mathcal{X}|=(\dim\mathcal{H}_{\rm A})^2$, $\mathrm{Pr}(B=X|X=x)$ is the probability of correct decoding the input $x$ and the maximum is taken over all possible encoding and decoding strategies.
\end{propsm}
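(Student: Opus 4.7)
Plan: I would recast both the success probability and the signalling-power SDP in terms of the link product, and then saturate the SDP with an explicit Weyl-covariant encoding--decoding pair.

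First, I would reparametrise the encoding by the bipartite state it prepares from the maximally entangled pair: set $\rho_x:=(\mathcal{M}_x\otimes\mathcal{I})(\Phi_d^+)$, a state on $\mathcal{H}_{A'}\otimes\mathcal{H}_A$ with $\tr_{A'}\rho_x=\mathrm{id}_A/d$ (the CJ isomorphism furnishes a bijection between CPTP maps $\mathcal{M}_x:\mathcal{H}_A\to\mathcal{H}_{A'}$ and bipartite states of this form). Bob then receives $\sigma_x=(\mathcal{N}\otimes\mathcal{I})(\rho_x)=N\star\rho_x$, and by associativity of the link product
\begin{equation*}
\mathrm{Pr}(B=X|X=x)=\tr[E_x(N\star\rho_x)]=N\star(\rho_x\star E_x).
\end{equation*}

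To prove $\sum_x\mathrm{Pr}(B=X|X=x)\leq 2^{\mathrm{S}(N)}$, I would introduce $W:=\sum_{x\in\mathcal{X}}\rho_x\star E_x$, an operator on $\mathcal{H}_{A'}\otimes\mathcal{H}_B$. Positivity preservation of the link product gives $W\succcurlyeq 0$, and a short calculation using the maximally-mixed marginal of each $\rho_x$ together with $\sum_xE_x=\mathrm{id}_{BA}$ yields $\tr_{A'}W=\mathrm{id}_B$, so $W\in\mathrm{Ch}_{A'B}^{\leftarrow}$. Summing the identity above over $x$ then gives $\sum_x\mathrm{Pr}(B=X|X=x)=\tr(N^T W)$, and since $\mathrm{Ch}^{\leftarrow}$ is closed under transposition the maximum over such $W$ equals precisely $2^{\mathrm{S}(N)}$.

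For saturation, given $W^*$ attaining the SDP maximum, I would exhibit a Weyl-covariant protocol realising $\tr(NW^*)$. Take $\mathcal{M}_x(\cdot)=V_x(\cdot)V_x^\dagger$ for the $d^2$ Weyl operators $V_x$, so $\rho_x=|\phi_x\rangle\langle\phi_x|$ forms the Bell basis of $\mathcal{H}_{A'}\otimes\mathcal{H}_A$; the ricochet property then gives $\sigma_x=(\mathrm{id}_B\otimes V_x^T)\sigma_0(\mathrm{id}_B\otimes V_x^*)$ with $\sigma_0=(\mathcal{N}\otimes\mathcal{I})(\Phi_d^+)$. Choose the covariant POVM
\begin{equation*}
E_x:=\tfrac{1}{d}(\mathrm{id}_B\otimes V_x^T)\,\tilde W^*\,(\mathrm{id}_B\otimes V_x^*),
\end{equation*}
where $\tilde W^*$ denotes $W^*$ with its tensor factors reordered to $(B,A')$. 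Positivity is immediate; the Weyl-twirl identity $\sum_xV_xXV_x^\dagger=d\tr(X)\mathrm{id}$ on the ancilla register, combined with $\tr_{A'}W^*=\mathrm{id}_B$, gives $\sum_xE_x=\mathrm{id}_{BA}$; and the elementary identity $V_x^*V_x^T=\mathrm{id}$ (direct for $V_x=X_d^kZ_d^\ell$) collapses each summand of $\sum_x\tr(E_x\sigma_x)$ to $d^{-1}\tr(\tilde W^*\sigma_0)$, whence the total success probability equals $d\tr(\tilde W^*\sigma_0)=\tr(W^*N)=2^{\mathrm{S}(N)}$.

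The main obstacle is purely bookkeeping: the partial transpose in the definition of the link product, the transpose appearing in property~(5), the SWAP between the CJ and maximally-entangled-state conventions, and the ricochet property each introduce unitaries, transposes, or register relabellings that must be tracked consistently. The substantive content---that a Weyl-covariant encoding paired with the POVM derived from the optimal dual witness $W^*$ exactly realises the SDP objective---is then a direct algebraic matching rather than a delicate existence argument.
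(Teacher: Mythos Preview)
Your proposal is correct and follows essentially the same approach as the paper: both prove the $\leq$ direction by packaging an arbitrary strategy into a feasible $W\in\mathrm{Ch}_{AB}^\leftarrow$, and the $\geq$ direction by encoding with an orthogonal unitary basis and building the decoding POVM by twisting the optimal $W^*$ by those same unitaries. The only cosmetic difference is that the paper phrases the construction entirely via the link product (writing $E_x=d^{-1}\,u_x\star W$ and using $\hat u_x\star u_x=\mathrm{id}$), whereas you unpack the same operation as explicit Weyl conjugation, the ricochet identity, and the twirl $\sum_x V_x X V_x^\dagger=d\,\tr(X)\,\mathrm{id}$; your acknowledged ``bookkeeping'' of transposes and swaps is exactly the translation between the two presentations.
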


\begin{proof}
By definition, the sum of the coincidence probabilities in the superdense coding protocol, assisted by a maximally entangled state $\Phi^+\in \mathrm{D}(\mathcal{H}_{\rm \bar{A}}\otimes\mathcal{H}_{\rm \bar{B}})$, $\mathcal{H}_{\rm \bar{A}}\simeq \mathcal{H}_{\rm \bar{B}}$, and with communication channel represented by a CPTP linear map $\mathcal{N}:\mathrm{L}(\mathcal{H}_{\rm A})\to \mathrm{L}(\mathcal{H}_{\rm B})$ is, by definition,
\begin{align}\label{eq:coincidence_prob_proof}
&\sum_{x\in\mathcal{X}}\mathrm{Pr}(B=X|X=x)\nonumber \\ &=\sum_{x\in\mathcal{X}}\tr\left\{E_x\hspace{0.05cm}\left(\mathcal{N}\circ \mathcal{M}_x\right)\otimes \mathcal{I}_{\rm \bar{B}}[\Phi^+]\right\}.
\end{align}
Here $\mathcal{M}_x\in\mathrm{Ch}_{\rm \bar{A}A}^{\rightarrow}$ are the encoding channels, $E_x\in\mathrm{L}(\mathcal{H}_{\rm B}\otimes \mathcal{H}_{\rm \bar{B}})$ are POVM effects of the decoding measurement and $\mathcal{I}_{\rm \bar{B}}$ denotes the identity map acting upon $\mathrm{\bar{B}}$. Invoking the CJ isomorphism and the link product we can write Eq.~\eqref{eq:coincidence_prob_proof} as
\begin{align*}
\sum_{x\in\mathcal{X}}\mathrm{Pr}(B=X|X=x)&=\sum_{x\in\mathcal{X}}E_x\star N\star M_x\star \Phi^+ \nonumber \\ &=N\star \sum_{x\in\mathcal{X}}M_x\star \Phi^+ \star E_x \nonumber \\ &=N\star \tilde{W},
\end{align*}
where $N$ and $M_x$ are CJ representations of $\mathcal{N}$ and $\mathcal{M}_x$, respectively, and we have defined
\begin{equation*}
\tilde{W}=\sum_{x\in \mathcal{X}} M_x \star \Phi^+\star E_x.
\end{equation*}
We now notice that $\tilde{W}\in\mathrm{L}(\mathcal{H}_{\rm A}\otimes \mathcal{H}_{\rm B})$, $\tilde{W}\pos 0$ and
\begin{align*}
\tr_{\rm A}\tilde{W}&=\mathrm{id}_{\rm A}\star \sum_{x\in \mathcal{X}}\hspace{0.05cm} M_x \star \Phi^+\star E_x \nonumber \\ &=\frac{1}{\dim\mathcal{H}_{\rm A}}\sum_{x\in \mathcal{X}} \tr_{\rm \bar{B}}E_x
\nonumber \\ &=\mathrm{id}_{\rm B},
\end{align*}
where the fact that $\{E_x\}$ is a POVM was used, in particular $\sum_{x\in\mathcal{X}} E_x=\mathrm{id}_{\rm B\bar{B}}$. Therefore, $\tilde{W}\in \mathrm{Ch}_{\rm AB}^\leftarrow$, and hence
\begin{equation}\label{eq:max_leq_exp}
\max\sum_{x\in\mathcal{X}}\mathrm{Pr}(B=X|X=x)\leq 2^{\mathrm{S}(N)}.
\end{equation}
To prove the reverse inequality, and thus establish the equality, we first define
\begin{equation*}
W=\argmax_{\hat{W}\in \mathrm{Ch}_{\rm AB}^\leftarrow}\tr N\hat{W}
\end{equation*}
and fix an orthogonal basis of maximally entangled states which, up to normalisation, represent unitary channels; that is, let $u_x\in\mathrm{Ch}_{\rm A\bar{B}}^\rightarrow$, $\tr(u_\mu u_\nu)=(\dim\mathcal{H}_{\rm A})^2\delta_{\mu\nu}$, and
\begin{equation*}
\frac{1}{\dim\mathcal{H}_{\rm A}}\sum_{x\in\mathcal{X}}u_x=\mathrm{id}_{\rm A\bar{B}} .
\end{equation*}
Next, we define $\hat{u}_x\in\mathrm{Ch}_{\rm A\bar{A}}^\rightarrow$, which, for each $x\in\mathcal{X}$, represents the inverse unitary transformation represented by $u_x$. For each $x\in\mathcal{X}$, $\hat{u}_x\star u_x$ represents the identity channel and, hence,
\begin{align}\label{eq:decomp}
\tr(NW)&=N\star W \nonumber \\ &=N\star \hat{u}_x\star u_x \star W \nonumber\\ &=\frac{1}{(\dim\mathcal{H}_{\rm A})^2}\sum_{x\in\mathcal{X}}N\star \hat{u}_x\star u_x \star W.
\end{align}
Now we define $E_x=(\dim\mathcal{H}_{\rm A})^{-1}(u_x\star W)$. For all $x\in\mathcal{X}$, $E_x\pos 0$ and
\begin{align*}
\sum_{x\in\mathcal{X}}E_x&=\frac{1}{\dim\mathcal{H}_{\rm A}}\sum_{x\in\mathcal{X}}u_x\star W \nonumber \\&=\mathrm{id}_{\rm \bar{B}}\otimes \tr_{\rm A}W\nonumber \\&=\mathrm{id}_{\rm B\bar{B}}.
\end{align*}
That is, $\{E_x\}$ is a POVM. The other term in Eq. (\ref{eq:decomp}), $(\dim\mathcal{H}_{\rm A})^{-1}(N\star \hat{u}_x)$, is the state obtained from the application of the local unitary channel represented by $\hat{u}_x$, say $\hat{\mathcal{U}}_x$, to one part of the maximally entangled state $\psi$ followed by application of the channel represented by $N$, 
\begin{equation*}
\frac{N\star \hat{u}_x}{\dim\mathcal{H}_{\rm A}}=\big((\mathcal{N}\circ\hat{\mathcal{U}}_x)\otimes \mathcal{I}_{\rm \bar{B}}\big)[\psi].
\end{equation*}
Therefore, $\{\hat{\mathcal{U}}_x\}$ (encoding) together with $\{E_x\}$ (decoding) define a valid strategy for the superdense coding with coincidence probabilities given by
\begin{equation*}
\mathrm{Pr}(B=X|X=x)=\frac{N\star \hat{u}_x\star u_x\star W}{(\dim\mathcal{H}_{\rm A})^2}.
\end{equation*}
Consequently,
\begin{equation}
2^{\mathrm{S}(N)}\leq \max \sum_{x\in\mathcal{X}}\mathrm{Pr}(B=X|X=x).
\end{equation}
This inequality together with Eq.~\eqref{eq:max_leq_exp} lead to the equality of Eq.~\eqref{eq:coincidence_prob_proof}.
\end{proof}

\begin{prop}\label{prop:4}
For entanglement-breaking channels, the signaling power is upper bounded by $\log_2\dim\mathcal{H}_{\rm A}$.
\end{prop}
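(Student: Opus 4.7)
My strategy combines semidefinite duality with the separability of the Choi operator of an entanglement-breaking channel. First, I would derive the dual of the optimisation in~\eqref{eq:SP}. Introducing a positive semidefinite multiplier for $W \pos 0$ and a Hermitian multiplier $Z$ on $\mathcal{H}_{\rm B}$ for the constraint $\tr_{\rm A} W = \mathrm{id}_{\rm B}$, the Lagrangian is bounded in $W$ only when $\mathrm{id}_{\rm A}\otimes Z \pos N$, which yields
\begin{equation*}
2^{\mathrm{S}(N)} = \min\bigl\{\tr Z :\, Z = Z^{\dagger},\; \mathrm{id}_{\rm A}\otimes Z \pos N\bigr\}.
\end{equation*}
Strong duality holds because the primal is strictly feasible (take, e.g., $W = \mathrm{id}_{\rm AB}/\dim\mathcal{H}_{\rm A}$), so exhibiting any dual-feasible $Z$ immediately gives the upper bound $2^{\mathrm{S}(N)} \leq \tr Z$.

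Second, I would invoke the standard characterisation: $\mathcal{N}$ is entanglement-breaking if and only if its Choi operator is separable across $\rm A\mid B$, i.e.~$N = \sum_i A_i \otimes B_i$ with $A_i, B_i \pos 0$. The natural dual-feasible candidate is then $Z := \tr_{\rm A} N = \sum_i (\tr A_i)\, B_i$, which is positive semidefinite and whose trace equals $\tr N = \dim \mathcal{H}_{\rm A}$ by the trace-preserving condition $\tr_{\rm B} N = \mathrm{id}_{\rm A}$. Feasibility reduces to the termwise operator inequality
\begin{equation*}
\mathrm{id}_{\rm A}\otimes Z - N = \sum_i \bigl[(\tr A_i)\,\mathrm{id}_{\rm A} - A_i\bigr]\otimes B_i \pos 0,
\end{equation*}
which is immediate from $A_i \preccurlyeq (\tr A_i)\,\mathrm{id}_{\rm A}$ (using $\|A_i\|_{\infty} \leq \tr A_i$ for positive semidefinite operators) together with $B_i \pos 0$. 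Combining both ingredients yields $\mathrm{S}(N) \leq \log_2 \dim \mathcal{H}_{\rm A}$.

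The key conceptual point, and really the only step requiring any care, is that the elementary choice $Z = \tr_{\rm A} N$ becomes dual-feasible precisely because $N$ is separable: for a generic (entangled) Choi operator the inequality $\mathrm{id}_{\rm A}\otimes \tr_{\rm A} N \pos N$ would fail, as illustrated by the maximally entangled Choi operator where $\mathrm{id}\otimes \tr_{\rm A}\Phi^+ = \mathrm{id}/d$ clearly does not dominate $\Phi^+$. Thus the heart of the argument is the operator-inequality certificate, which hinges on the positive product decomposition guaranteed by the entanglement-breaking assumption; the remainder is a routine consequence of SDP duality and trace preservation.
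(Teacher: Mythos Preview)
Your proof is correct, but it proceeds via a different route than the paper. The paper works entirely in the primal: it writes $N$ in Holevo form $N=\sum_x E_x\otimes\varrho_x$ with $\{E_x\}$ a POVM and $\varrho_x$ states, then for any feasible $W$ bounds $\tr(NW)=\sum_x\tr[E_x(W\star\varrho_x)]\le\sum_x\tr E_x=\dim\mathcal{H}_{\rm A}$, using that $W\star\varrho_x$ is a density matrix and hence $\preccurlyeq\mathrm{id}_{\rm A}$. You instead pass to the dual and exhibit the certificate $Z=\tr_{\rm A}N$, checking feasibility termwise from a generic separable decomposition. Both arguments are short; the paper's has the advantage of never invoking duality (so no Slater/strong-duality discussion is needed---indeed, for a mere upper bound weak duality would already suffice in your argument), while yours makes transparent exactly \emph{why} the entanglement-breaking hypothesis is the sharp one: the natural candidate $Z=\tr_{\rm A}N$ is dual-feasible precisely when $N$ is separable, as your maximally-entangled counterexample illustrates. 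The Holevo form used in the paper is of course just a particular separable decomposition, so at the level of the underlying inequality the two proofs are close cousins viewed from opposite sides of the SDP.
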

\begin{proof}
Every entanglement-breaking channel can be written in its Holevo form \cite{HSR03}, i.e., written as a measure-and-prepare channel,
\begin{equation}
N=\sum_x E_x\otimes\varrho_x,
\end{equation}
where $\{E_x\}$ is a POVM and $\varrho_x$ are density operators. Hence,
\begin{align}
\max_W\tr(NW)&=\max_W\sum_x\tr[(E_x\otimes \varrho_x)^\TT W] \nonumber \\ &=\max_W\sum_x \tr[E_x(W\star \varrho_x)]\nonumber \\ &\leq \sum_x \tr E_x=\dim \mathcal{H}_{\rm A},
\end{align}
and the inequality $\mathrm{S}(N)\leq \log_2\dim \mathcal{H}_{\rm A}$ follows from the monotonicity of the logarithm.
\end{proof}

\begin{propsm}\label{prop:5}
For a given superchannel with CJ operator $T$, the following are equivalent:
\begin{itemize}
    \item[(1)] $\mathrm{S}(T\star N)\leq \mathrm{S}(N)$ for all input channel $N$. 
    \item[(2)] It maps trace-and-prepare channels into trace-and-prepare channels.
    \item[(3)] $T$ is bistochastic.
\end{itemize}
\end{propsm}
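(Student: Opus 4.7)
The plan is to establish the cycle (1)$\Rightarrow$(2)$\Rightarrow$(3)$\Rightarrow$(1). The first implication is immediate from faithfulness (Proposition~\ref{prop:1}): if $N$ is trace-and-prepare then $\mathrm{S}(N) = 0$, so (1) gives $\mathrm{S}(T\star N)\leq 0$; combined with non-negativity this forces $\mathrm{S}(T\star N) = 0$, and Proposition~\ref{prop:1} then yields that $T\star N$ itself is trace-and-prepare.

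For (2)$\Rightarrow$(3) I would expand the link product,
\begin{equation*}
T\star(\mathrm{id}_A\otimes \varpi_B) \;=\; \tr_B\bigl[(\tr_A T)(\varpi_B^{\TT}\otimes \mathrm{id}_{\bar{B}})\bigr],
\end{equation*}
and impose that the right-hand side equal $\mathrm{id}_{\bar{A}}\otimes \tilde{\varpi}(\varpi_B)$ for every state $\varpi_B$. Decomposing $\tr_A T$ in a basis of $\mathrm{L}(\mathcal{H}_{\bar{A}})$ that contains $\mathrm{id}_{\bar{A}}$ and is otherwise traceless reduces the condition to the vanishing of several induced linear maps $\mathrm{L}(\mathcal{H}_B)\to \mathrm{L}(\mathcal{H}_{\bar{B}})$; a short calculation shows that the CJ operator of each such map coincides with the corresponding block of $\tr_A T$, so by faithfulness of the CJ isomorphism those blocks must vanish. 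This gives $\tr_A T = \mathrm{id}_{\bar{A}}\otimes T^{(2)}$ with $\tr_{\bar{B}}T^{(2)} = \mathrm{id}_B$ (the latter by tracing over $\bar{B}$ and invoking the superchannel normalisation of $\tr_{AB\bar{B}}T$). Combined with the factorisation $\tr_{\bar{B}}T = T^{(1)}\otimes \mathrm{id}_B$ already built into the superchannel structure, this yields both marginal conditions defining bistochasticity -- equivalently, the non-signalling structure of $T$ viewed as a bipartite channel between Alice's wires $(\bar{A},A)$ and Bob's $(B,\bar{B})$.

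For (3)$\Rightarrow$(1) I would use the SDP dual of the signalling power,
\begin{equation*}
2^{\mathrm{S}(N)} \;=\; \min\bigl\{\tr\mu : \mu\succcurlyeq 0,\; \mathrm{id}_A\otimes \mu \succcurlyeq N\bigr\},
\end{equation*}
obtained by a routine Lagrangian computation (strong duality holds by Slater's condition, since $W=\mathrm{id}_{AB}/d_A$ is strictly primal-feasible). For any dual-feasible $\mu$ for $N$, positivity of $\mathrm{id}_A\otimes \mu - N$ is preserved by $T\star$, so $T\star(\mathrm{id}_A\otimes \mu)\succcurlyeq T\star N$. By (3), read through the factorisation established above, $T\star(\mathrm{id}_A\otimes \mu) = \mathrm{id}_{\bar{A}}\otimes \tilde{\mu}$ with $\tilde{\mu}\succcurlyeq 0$, and a trace count (using that the superchannel rescales traces by $d_{\bar{A}}/d_A$) gives $\tr\tilde{\mu} = \tr\mu$. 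Hence $\tilde{\mu}$ is dual-feasible for $\mathrm{S}(T\star N)$ with the same objective, and taking the minimum yields $\mathrm{S}(T\star N)\leq \mathrm{S}(N)$.

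The main obstacle is the marginal-factorisation step in (2)$\Rightarrow$(3): extracting a clean structural condition on $\tr_A T$ from the per-state condition on $T\star(\mathrm{id}_A\otimes \varpi_B)$ requires careful bookkeeping of partial transpositions under the link product and an appeal to faithfulness of the CJ isomorphism to force the traceless blocks to vanish. By contrast, the SDP-duality step in (3)$\Rightarrow$(1) and the faithfulness step in (1)$\Rightarrow$(2) are comparatively short.
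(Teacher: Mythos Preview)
Your cycle is correct, and (1)$\Rightarrow$(2) matches the paper's argument exactly. The two substantive differences are in the remaining steps.

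For (2)$\Rightarrow$(3), the paper does not unpack the marginal $\tr_A T$ by hand. Instead it rewrites condition~(2) as $\tr_A(T\star\varpi)=\mathrm{id}_{\bar{\rm A}}\otimes\sigma$ for every state $\varpi$, observes that this says $T\star\varpi$ is (for each $\varpi$) the CJ operator of a one-step comb mapping channels $\bar{\rm B}\to\bar{\rm A}$ into states on $\rm A$, and then invokes the structural characterisation of quantum combs (Theorem~3 of Ref.~[CDP09]) to conclude that $T$ itself is a superchannel in the reverse causal direction, i.e.\ bistochastic. Your block-decomposition argument reaches the same non-signalling marginal condition $\tr_A T=\mathrm{id}_{\bar{\rm A}}\otimes T^{(2)}$ directly, without the external reference; it is more elementary but, as you note, requires more bookkeeping with partial transpositions.

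For (3)$\Rightarrow$(1) the approaches are genuinely different. The paper stays on the \emph{primal} side: bistochasticity implies that linking $T$ over the $\bar{\rm A}\bar{\rm B}$ indices sends any $W\in\mathrm{Ch}_{\bar{\rm A}\bar{\rm B}}^\leftarrow$ to some $T\star W\in\mathrm{Ch}_{\rm AB}^\leftarrow$; associativity of the link product then gives $\tr[(T\star N)W]=\tr[N(T\star W)]\le 2^{\mathrm{S}(N)}$ for every feasible $W$, and the inequality follows in one line. Your argument works on the \emph{dual} side, pushing a min-entropy certificate $\mu$ forward through the superchannel. Both are valid; the paper's route is shorter and avoids having to state and justify strong duality, while yours makes explicit the connection to the conditional min-entropy formulation already mentioned in the main text and uses only the forward action $N\mapsto T\star N$ rather than the adjoint link action.
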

\begin{proof}
Clearly, (1) $\implies$ (2). Mathematically, (2) can be stated as $T\star(\mathrm{id}_{\rm A}\otimes \varpi)=\mathrm{id}_{\rm \bar{A}}\otimes \sigma$, $\sigma\in\mathrm{D}(\mathcal{H}_{\rm \bar{B}})$, for all $\varpi\in\mathrm{D}(\mathcal{H}_{\rm B})$ or, equivalently, $\tr_{\rm A}\hspace{0.025cm}(T\star \varpi)=\mathrm{id}_{\rm \bar{A}}\otimes \sigma$. Invoking the result of Ref.~ \cite{CDP09} (Theorem 3 there), one has that, for each $\varpi$, $T\star \varpi$ represents a channel with memory that maps channels $\rm \bar{B}\to \bar{A}$ into states for the system $\rm A$. Since this should hold true for all input states $\varpi$, we conclude that $T$ represents a superchannel transforming channels $\rm \bar{B}\to \bar{A}$ into channels $\rm B\to A$, i.e., $T$ is bistochastic. Finally, if $T$ is bistochastic then $W\in \mathrm{Ch}_{\rm \bar{A}\bar{B}}^\leftarrow$ implies $T\star W\in \mathrm{Ch}_{\rm AB}^\leftarrow$ and, consequently, (3) $\implies$ (1).
\end{proof}

\begin{propsm}\label{prop:6}
$\mathrm{S}(N\otimes M)=\mathrm{S}(N)+\mathrm{S}(M)$.
\end{propsm}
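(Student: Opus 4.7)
The plan is to prove additivity via semidefinite programming duality. First I would write the SDP computing $2^{\mathrm{S}(N)}=\max\{\tr(NW):W\succcurlyeq 0,\,\tr_{\rm A} W=\mathrm{id}_{\rm B}\}$ and derive its dual by introducing a Hermitian multiplier $Y\in\mathrm{L}(\mathcal{H}_{\rm B})$ for the linear constraint. Computing the Lagrangian and requiring boundedness over $W\succcurlyeq 0$ forces $\mathrm{id}_{\rm A}\otimes Y\succcurlyeq N$, yielding the dual
\begin{equation*}
2^{\mathrm{S}(N)}=\min\{\tr Y:Y\succcurlyeq 0,\,\mathrm{id}_{\rm A}\otimes Y\succcurlyeq N\}.
\end{equation*}
Strong duality follows from Slater's condition, since $W=\mathrm{id}_{\rm AB}/\dim\mathcal{H}_{\rm A}$ is strictly feasible for the primal.

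Next, the easy direction $\mathrm{S}(N\otimes M)\geq \mathrm{S}(N)+\mathrm{S}(M)$: pick primal optimizers $W^\star_N,W^\star_M$ and use $W^\star_N\otimes W^\star_M$ (reordered appropriately to live on $\mathcal{H}_{\rm AA'}\otimes\mathcal{H}_{\rm BB'}$) as a feasible point for the joint problem. Feasibility is immediate since positivity is preserved under tensor products and $\tr_{\rm AA'}(W^\star_N\otimes W^\star_M)=\mathrm{id}_{\rm B}\otimes\mathrm{id}_{\rm B'}=\mathrm{id}_{\rm BB'}$. The objective value then factorises: $\tr((N\otimes M)(W^\star_N\otimes W^\star_M))=\tr(NW^\star_N)\tr(MW^\star_M)=2^{\mathrm{S}(N)+\mathrm{S}(M)}$.

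The main obstacle — the subadditive direction $\mathrm{S}(N\otimes M)\leq \mathrm{S}(N)+\mathrm{S}(M)$ — is what makes SDP duality essential. Let $Y^\star_N,Y^\star_M$ be dual optimizers, so $\mathrm{id}_{\rm A}\otimes Y^\star_N\succcurlyeq N$ and $\mathrm{id}_{\rm A'}\otimes Y^\star_M\succcurlyeq M$, both sides being positive semidefinite. The key lemma is monotonicity of tensor products on the positive cone: if $X\succcurlyeq X'\succcurlyeq 0$ and $Z\succcurlyeq Z'\succcurlyeq 0$, then $X\otimes Z-X'\otimes Z'=(X-X')\otimes Z+X'\otimes(Z-Z')\succcurlyeq 0$. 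Applying this with $X=\mathrm{id}_{\rm A}\otimes Y^\star_N$, $X'=N$, $Z=\mathrm{id}_{\rm A'}\otimes Y^\star_M$, $Z'=M$, and reordering factors, I obtain $\mathrm{id}_{\rm AA'}\otimes(Y^\star_N\otimes Y^\star_M)\succcurlyeq N\otimes M$. Hence $Y^\star_N\otimes Y^\star_M$ is dual-feasible for $N\otimes M$, which gives $2^{\mathrm{S}(N\otimes M)}\leq \tr(Y^\star_N\otimes Y^\star_M)=\tr Y^\star_N\cdot\tr Y^\star_M=2^{\mathrm{S}(N)+\mathrm{S}(M)}$. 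Taking $\log_2$ of both bounds closes the proof. The only delicate point is verifying strong duality (so the dual min is attained and equals the primal max), but this is standard since the primal admits a strictly feasible interior point.
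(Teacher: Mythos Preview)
Your proof is correct and takes a genuinely different route from the paper's. You argue via SDP duality: you derive the dual $2^{\mathrm{S}(N)}=\min\{\tr Y:\mathrm{id}_{\rm A}\otimes Y\succcurlyeq N\}$, invoke Slater for strong duality, and then tensor dual optimizers, using the monotonicity $X\succcurlyeq X'\succcurlyeq 0,\ Z\succcurlyeq Z'\succcurlyeq 0\Rightarrow X\otimes Z\succcurlyeq X'\otimes Z'$ to get dual feasibility for the joint problem. This is exactly the standard additivity argument for the conditional min-entropy (which the paper itself notes is equivalent to $\mathrm{S}$), and it is clean and self-contained. The paper instead proves the subadditive direction operationally: it fixes orthogonal bases of maximally-entangled unitaries $\{u_x\},\{v_y\}$, expands $(N\otimes M)\star W$ as a sum $\sum_{xy}\tr[E_{xy}(\varrho_x\otimes\varsigma_y)]$ with $\{E_{xy}\}$ a POVM and $\varrho_x,\varsigma_y$ product states, and then invokes a known result that the optimal POVM for discriminating a product ensemble with no prior can be taken local, which bounds the sum by $2^{\mathrm{S}(N)}\cdot 2^{\mathrm{S}(M)}$ via Proposition~\ref{prop:3}. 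Your argument is shorter and avoids the external state-discrimination lemma; the paper's argument, by contrast, keeps the operational superdense-coding picture in view and makes the connection to Proposition~\ref{prop:3} explicit. One minor remark: the constraint $Y\succcurlyeq 0$ in your dual is redundant (it follows from $\mathrm{id}_{\rm A}\otimes Y\succcurlyeq N\succcurlyeq 0$ by tracing out $\rm A$), but including it does no harm.
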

\begin{proof}
Let $N\in\mathrm{Ch}_{\rm AB}^\rightarrow$ and $M\in\mathrm{Ch}_{\rm \tilde{A}\tilde{B}}^\rightarrow$. Given any $W\in\mathrm{Ch}_{\rm AB}^\leftarrow$ and $\tilde{W}\in\mathrm{Ch}_{\rm \tilde{A}\tilde{B}}^\leftarrow$, one has $W\otimes \tilde{W}\in \mathrm{Ch}_{\rm A\tilde{A}B\tilde{B}}^\leftarrow$. Here, $X\in \mathrm{Ch}_{\rm A\tilde{A}B\tilde{B}}^\leftarrow$ if and only if $X\in \mathrm{L}(\mathcal{H}_{\rm A}\otimes\mathcal{H}_{\rm \tilde{A}}\otimes \mathcal{H}_{\rm B}\otimes \mathcal{H}_{\rm \tilde{B}})$, $X\pos 0$ and $\tr_{\rm A\tilde{A}}\hspace{0.05cm}X=\mathrm{id}_{\rm B\tilde{B}}$. Therefore,
\begin{equation}\label{eq:super_add}
\mathrm{S}(N)+\mathrm{S}(M)\leq \mathrm{S}(N\otimes M).
\end{equation}
To prove the reverse inequality, we first set basis of maximally entangled states, $u_x\in \mathrm{Ch}_{\rm AA^\prime}^\rightarrow$ and $v_y\in \mathrm{Ch}_{\rm A\tilde{A}^\prime}^\rightarrow$, that represent unitary channels, $x\in \mathcal{X}$ and $y\in \mathcal{Y}$. Next, we define
\begin{equation*}
E_{xy}=\frac{(u_x\otimes v_y)\star W}{\dim (\mathcal{H}_{\rm A}\otimes \mathcal{H}_{\rm \tilde{A}})}.
\end{equation*}
Clearly $E_{xy}\succcurlyeq 0$ and
\begin{align*}
\sum_{xy}E_{xy}&=\frac{1}{\dim (\mathcal{H}_{\rm A}\otimes \mathcal{H}_{\rm \tilde{A}})}\left(\sum_{x\in\mathcal{X}} u_x\otimes \sum_{y\in\mathcal{Y}}v_y\right)\star W\nonumber \\ &=\mathrm{id}_{\rm A^\prime}\otimes \mathrm{id}_{\rm \tilde{A}^\prime}\tr_{\rm A\Tilde{A}}\hspace{0.05cm}W \nonumber\\ &=\mathrm{id}_{\rm A^\prime \tilde{A}^\prime B\tilde{B}}.
\end{align*}
If $\hat{u}_x$ and $\hat{v}_x$ represent the inverse of the unitaries represented by $u_x$ and $v_y$, respectively, then
\begin{align}\label{eq:add_rev}
&(N\otimes M)\star W\nonumber \\&=\frac{\sum_{xy}[(N\star \hat{u}_x)\otimes (M\star \hat{v}_y)]\star (u_x\otimes v_y)\star W}{[\dim(\mathcal{H}_{\rm A}\otimes \mathcal{H}_{\rm \tilde{A}})]^2}\nonumber \\ &=\sum_{xy}\tr[E_{xy}(\varrho_x\otimes \varsigma_y)],
\end{align}
where $\varrho_x=(\dim \mathcal{H}_{\rm A})^{-1}N\star \hat{u}_x\in \mathrm{D}(\mathcal{H}_{\rm A}\otimes \mathcal{H}_{\rm A^\prime})$ and $\varsigma_y=(\dim \mathcal{H}_{\rm \tilde{A}})^{-1}M\star \hat{v}_y\in \mathrm{D}(\mathcal{H}_{\rm \tilde{A}}\otimes \mathcal{H}_{\rm \tilde{A}^\prime})$. The right-hand side of Eq. (\ref{eq:add_rev}) can be seen as being (proportional to) the average correct probability of discriminating the set of states $\{\varrho_x\otimes \varsigma_y\}$ with no prior information with the POVM $\{E_{xy}\}$. It is always possible to find local POVMs, say $\{F_x\}$ and $\{G_y\}$, that give rise to a decoding strategy that is at least as good as the one given by $\{E_{xy}\}$. In fact, if $\{F_x\}$ and $\{G_y\}$ are optimal POVM for discriminating the set of states $\{\varrho_x\}$ and $\{\varsigma_y\}$ with no prior information, respectively, then $\sum_x F_x \varrho_x-\varrho_{x^\prime}\succcurlyeq 0$ and $\sum_y G_y \varsigma_y-\sigma_{y^\prime}\pos 0$ for all $x^\prime$ and $y^\prime$  \cite{BC09}. As a consequence, $\sum_{xy} (F_x\otimes G_y)(\varrho_x\otimes \varsigma_y)-\varrho_{x^\prime}\otimes \sigma_{y^\prime}\pos 0$ for all $x^\prime$ and $y^\prime$, which implies that $\{F_x\otimes G_y\}$ is an optimal POVM for discriminating $\{\varrho_x\otimes \varsigma_y\}$ with no prior information. Therefore,
\begin{equation*}
\sum_{xy}\tr[E_{xy}(\varrho_x\otimes \varsigma_y)]\leq \sum_{xy}\tr[(F_x\otimes G_y)(\varrho_x\otimes \varsigma_y)].
\end{equation*}
This, together with the result of Proposition~\ref{prop:3}, leads to
\begin{equation}
\mathrm{S}(N)+\mathrm{S}(M)\geq \mathrm{S}(N\otimes M),
\end{equation}
which, together with Eq. (\ref{eq:super_add}), lead to the additivity of $\rm S$.
\end{proof}

\section*{Appendix C. Information dynamics in phase covariant master equations}\hypertarget{AppendixC}{ }

Here, we analyze the information dynamics of a two-level open quantum system under the action of a general phase-covariant noise. In this scenario, a qubit evolves under the action of a quantum dynamical map that commutes with the group of unitaries generated by $\sigma_z$, $\mathcal{U}_z(g)=\ee^{-\ii g \sigma_z}\cdot \ee^{\ii g \sigma_z}$, $g\in\mathbb{R}$. Equivalently \cite{SKHD16}, the evolution of such a system can be described by a local-in-time master equation that has the form 
\begin{align}
\dot{\varrho}_t=-\frac{\ii\omega(t)}{2}[\sigma_z,\varrho_t]&+\frac{\gamma_+(t)}{2}\left(\sigma_+\varrho_t\sigma_--\frac{1}{2}\{\sigma_+\sigma_-,\varrho_t\}\right)\nonumber \\ &+\frac{\gamma_-(t)}{2}\left(\sigma_-\varrho_t\sigma_+-\frac{1}{2}\{\sigma_-\sigma_+,\varrho_t\}\right)\nonumber \\ &+\frac{\gamma_z(t)}{2}\left(\sigma_z\varrho_t\sigma_z-\varrho_t\right).
\end{align}
In other words, the system is under the action of pure dephasing, energy dissipation, and energy gain with time-dependent rates $\gamma_z(t)$, $\gamma_-(t)$, $\gamma_+(t)$, respectively. 

In the computational basis, one can write the solution of a generic phase-covariant master equation in term of the matrix elements,
\begin{subeqnarray}
\bra{1}\rho_t\ket{1}&=&G(t,0)^2\bra{1}\rho_0\ket{1}+H(t), \\
\bra{0}\rho_t\ket{1}&=&G(t,0)\Gamma_z(t)\Omega(t)\bra{0}\rho_0\ket{1},
\end{subeqnarray}
where we have introduced the functions
\begin{align}
G(t,s)&=\exp\left(-\frac{1}{4}\int_s^t[\gamma_+(\tau)+\gamma_-(\tau)]\dd \tau\right), \\ 
H(t)&=\int_0^t G(t,s)^2 \gamma_+(s)\dd s,\\
\Gamma_z(t)&=\exp\left(-\int_0^t \gamma_z(s)\dd s\right),\\
\Omega(t)&=\exp\left(-\ii \int_0^t 2\omega(s)\dd s\right).
\end{align}
Since the evolution commutes with rotations generated by $\sigma_z$, the last term will not contribute to the memory effects, and thus will neglected throughout; that is, we will fix $\Omega(t)=1$ for all $t\geq 0$.

To evaluate the information (back)flow in such a system, we first compute the CJ operator of the quantum dynamical map in the computational basis,
\begin{equation}
N_t={\small\left(\begin{matrix}
    1-H(t)& 0 & 0 & G(t)\Gamma_z(t) \\ 0& H(t) & 0 & 0\\ 0& 0 & 1-G(t)^2-H(t) & 0\\G(t)\Gamma_z(t)& 0 & 0 & G(t)^2+H(t)
\end{matrix}\right)},
\end{equation}
$G(t,0)\equiv G(t)$. If we write down the matrix representation of $W$, $W=[w_{ij}]$, assuming $\tr_{\rm A}\hspace{0.05cm}W=\mathrm{id}_{\rm B}$, we get
\begin{equation}
\tr(N_t W)=1+G(t)^2(w_{11}+w_{44}-1)+2G(t)\Gamma_z(t)\mathrm{Re}(w_{14}).
\end{equation}
In this case, the optimization yields a closed form solution. To obtain it, it is enough to notice that the only elements appearing above are $w_{11}$, $w_{44}$, and $w_{14}$. Moreover, due to the symmetry, we can set $w_{11} = w_{44} \leq 1$ (since $w_{11} + w_{44} \leq \tr W=2$) and $|w_{14}| \leq \sqrt{w_{11}w_{44}}$. Next, define $f_\alpha(x,y)=x+\alpha y$, where $0\leq x\leq 1$ and $|y|\leq x$. The maximum of this function for $\alpha>0$ equals $1+\alpha$. Consequently,
\begin{equation}
2^{\mathrm{S}(N_t)}=1+G(t)^2+2G(t)\Gamma_z(t),
\end{equation}
since $\alpha=\Gamma_z(t)/G(t)>0$. If $\alpha>1$, the minimum of $f_{\alpha}(x,y)$ equals $1-\alpha$. In that case, for $\mathrm{P}(N_t)=1-\min_W \tr(N_t W)$ we obtain
\begin{equation}
\mathrm{P}(N_t)=2G(t)\Gamma_z(t)-G(t)^2,
\end{equation}
where $\alpha=\Gamma_z(t)/G(t)>1$. Taking the derivative of this expression, we conclude that the condition for no information backflow reads
\begin{equation}\label{eq:no_info_backflow_decay_rates}
\gamma_+(t)+\gamma_-(t)+4\gamma_z(t)\pm[\gamma_+(t)+\gamma_-(t)]\frac{G(t)}{\Gamma_z(t)}\geq 0,
\end{equation}
where the condition with the minus sign is valid only if $\Gamma_z(t)/G(t)>1$. If, instead, $0<\alpha\leq 1$, then the minimum of $f_{\alpha}(x,y)$ is zero and, consequently,
\begin{equation}
\mathrm{P}(N_t)=G(t)^2.
\end{equation}
In this case, no information backflow is equivalent to
\begin{equation}
\gamma_+(t)+\gamma_-(t)\geq 0,
\end{equation}
assuming $\Gamma_z(t)/G(t)\leq 1$.

As mentioned in the main text, the eternally non-Markovian evolution \cite{HCLA14} is obtained when we consider, $\gamma_+(t)+\gamma_-(t)=\gamma>0$, and the left-hand-side of Eq. (\ref{eq:no_info_backflow_decay_rates}) null for all times. Consider
\begin{equation}
\gamma+4\gamma_z(t)-\gamma\exp\left(-\frac{1}{4}\int_0^t[\gamma-4\gamma_z(s)]\dd s\right)=0.
\end{equation}
This is an integral equation for $\gamma_z(t)$. Taking the derivative of it, and keeping the information that $\gamma_z(0)=0$ (which follows from the above equality), one obtains
\begin{equation}
16\dot{\gamma}_z+\gamma^2-16 \gamma_z^2=0.
\end{equation}
This is a simple separable ordinary differential equation. Its integration yields
\begin{equation}
\gamma_z(t)=-\frac{\gamma}{4}\tanh\left(\frac{\gamma t}{4}\right),
\end{equation}
corresponding to ``eternally non-Markovian dynamics'' of Ref. \cite{HCLA14}.

\section*{Appendix D. Information flow in causal networks and beyond}\hypertarget{AppendixD}{ }

We now discuss how to extend the notion of signalling power to scenarios involving $n$ parties [or laboratories(labs)] that exchange and measure quantum systems. Each party opens their laboratory exactly once. If a laboratory acts multiple times, each action is treated as a distinct party with a separate label. Inside each laboratory, we assume that quantum mechanics holds locally. Specifically, party $j$ receives an input system described by a finite-dimensional Hilbert space $\mathcal{H}_j^{\tt i}$ and performs a measurement represented by an arbitrary quantum instrument. In the CJ picture, this instrument is given by a collection of operators $\mathbf{E}_j = \{E_{A_j=a_j}\}$ corresponding to different measurement outcomes. After the measurement, party $j$ outputs a (potentially different) quantum system in a post-measurement state associated with the Hilbert space $\mathcal{H}_j^{\tt o}$.

If we assume local linearity of the measurement operators (in order to ensure convex linearity of probabilities under local mixtures of measurements), the joint probability distribution should be determined as in
\begin{align}
&\mathrm{Pr}(A_1\dots A_n\mid \mathbf{E}_1\dots \mathbf{E}_n)\nonumber \\&\quad = \tr\big[\big(E_{A_1}^\mathrm{T} \otimes \dots \otimes E_{A_n}^\mathrm{T}\big)(\Upsilon_n \otimes \varrho)\big].
\end{align}
Here, $\Upsilon_n \in \mathrm{L}(\mathcal{H}_{1}^{\tt i} \otimes \mathcal{H}_{1}^{\tt o} \otimes \dots \otimes \mathcal{H}_{n}^{\tt i} \otimes \mathcal{H}_{n}^{\tt o})$ is the so-called process matrix, and $\varrho \in \mathrm{D}(\mathcal{H}_{1}^{\rm ns} \otimes \dots \otimes \mathcal{H}_{n}^{\rm ns})$ is a multipartite quantum state. For a process with definite causal order, one can always define an enumeration $[n]$ such that $i \leq j$ implies that the $j$-th lab cannot signal to the $i$-th lab. That is, the marginal distribution
\begin{align*}
\sum_{a_j} \mathrm{Pr}(A_1\dots A_i\dots A_j=a_j\dots A_n|\mathbf{E}_1\dots \mathbf{E}_i\dots \mathbf{E}_j\dots \mathbf{E}_n)
\end{align*}
is independent of the instrument $\mathbf{E}_j$. These constraints are encoded in $\Upsilon_n$ as a hierarchy of trace conditions~\cite{CDP09}:
\begin{equation}
\tr_{j}^{\tt i} \hspace{0.05cm} \Upsilon_j = \mathrm{id}_{j-1}^{\tt o} \otimes \Upsilon_{j-1} \hspace{0.5cm} \text{for } 2 \leq j \leq n,
\end{equation}
where $\Upsilon_j$ are recursively defined, starting with $\Upsilon_1 \in \mathrm{D}(\mathcal{H}_{1}^{\tt i})$. Since the $n$-th lab cannot signal to any other, we may, without loss of generality, take $\mathcal{H}_{n}^{\tt o} \simeq \mathbb{C}$.

In the bipartite case, with labs being named Alice (A) and Bob (B), we have $\Upsilon_2\in \mathrm{L}(\mathcal{H}_{\rm A}^{\tt i}\otimes\mathcal{H}_{\rm A}^{\tt o}\otimes \mathcal{H}_{\rm B}^{\tt i})$, $\Upsilon_2\pos 0$ and
\begin{equation}
\tr_{\rm B}^{\tt i}\hspace{0.05cm}\Upsilon_2=\varpi\otimes \mathrm{id}_{\rm A}^{\tt o},
\end{equation}
where $\varpi\in \mathrm{D}(\mathcal{H}_{\rm A}^{\tt i})$. This operator represents an admissible transformation that maps quantum channels $\mathcal{M}:\mathrm{L}(\mathcal{H}_{\rm A}^{\tt i})\to \mathrm{L}(\mathcal{H}_{\rm A}^{\tt o})$ into quantum states $\eta \in\mathrm{D}(\mathcal{H}_{\rm B}^{\tt i})$ as well as a quantum channel with input being the outputs of Alice. In fact, notice that $\tr_{\rm A}^{\tt i}\circ\tr_{\rm B}^{\tt i} \hspace{0.05cm}\Upsilon_2=\mathrm{id}_{\rm A}^{\tt o}$. Based on this observation one can define
\begin{equation}
\mathrm{S}(\Upsilon_2)=\log\max_W \tr(\Upsilon_2W),
\end{equation}
where the maximum is taken over positive semidefinite operators satisfying $\tr_{\rm A}^{\rm o} W=\mathrm{id}_{\rm AB}^{\tt i}$. In the main text, we observe that
\begin{equation}
    \mathrm{S}(\Upsilon_2)=0 \hspace{0.5cm} \iff \hspace{0.5cm} \Upsilon_2=\eta\otimes\mathrm{id}_{\rm A}^{\tt o}, 
\end{equation}
where $\eta\in\mathrm{D}(\mathcal{H}_{\rm A}^{\tt i}\otimes \mathcal{H}_{\rm B}^{\tt i})$. These process matrices correspond to common cause processes: even though Alice and Bob can get correlated outcomes, there is no causal influence from $\rm A$ to $\rm B$. In fact, notice that within this process, Alice cannot signal Bob, i.e., effectively they only share non-signalling resources. 

\begin{figure}
    \centering
    \includegraphics[scale=0.65]{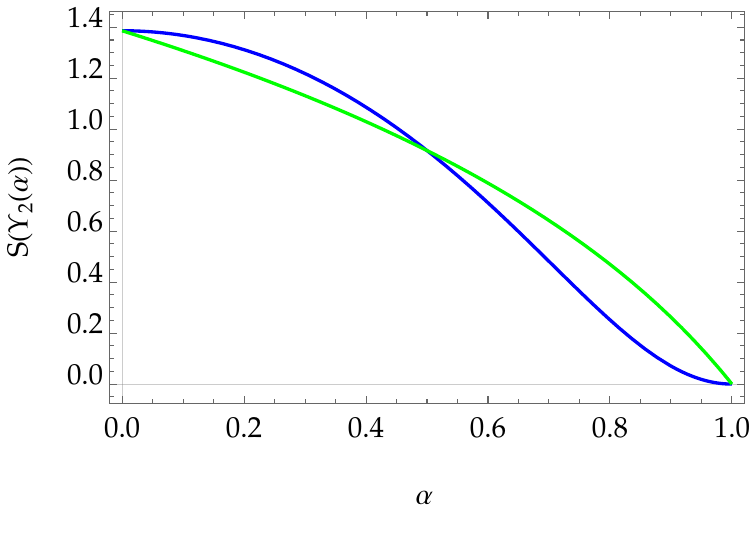}
    \caption{Direct causal influence in processes consisting of coherent (blue) and incoherent (green) mixtures of common-cause and direct-cause.}
    \label{fig:coherent_vs_incoherent}
\end{figure}

As an example, in Fig.~\ref{fig:coherent_vs_incoherent} we plot this function for processes consisting of coherent and incoherent superposition of common cause, i.e., $\Upsilon_2^{\rm cc}=\eta\otimes\mathrm{id}_{\rm A}^{\tt o}$, and direct cause, $\Upsilon_2^{\rm dc}=\varpi\otimes K$, where $K\in\mathrm{L}(\mathcal{H}_{\rm A}^{\tt o}\otimes\mathcal{H}_{\rm B}^{\tt i})$, $K\succcurlyeq 0$ and $\tr_{\rm B}^{\tt i} \hspace{0.05cm}K=\mathrm{id}_{\rm A}^{\tt o}$. The processes we use to plot Fig.~\ref{fig:coherent_vs_incoherent} was the one considered in Ref.~\cite{MRSR17}. The common-cause and direct-cause processes are, respectively, $\Upsilon_2^{\rm cc}=\ket{\Phi_2^+}\bra{\Phi_2^+}\otimes \mathrm{id}_{\rm A}^{\tt o}$ and $\Upsilon_2^{\rm dc}=\mathrm{id}_{\rm A}^{\tt i}\otimes \ket{\Phi_2^+}\bra{\Phi_2^+}$, where $\sqrt{2}\ket{\Phi_2^+}=\ket{00}+\ket{11}$. An incoherent mixture between common-cause and direct-cause is a process of the form
\begin{equation}
\Upsilon_2^{\rm incoh}(\alpha)=\alpha\Upsilon_2^{\rm cc}+(1-\alpha)\Upsilon_2^{\rm dc}.
\end{equation}
To define a coherent mixture of causal relations, one can employ the partial \textsc{swap} unitary,
\begin{equation}
U_{\rm PS}(\theta)=\cos\left(\frac{\theta}{2}\right)\mathrm{id}+\ii \sin\left(\frac{\theta}{2}\right)U_{\rm SWAP},
\end{equation}
We then define
\begin{equation}
\Upsilon_2^{\rm coh}(\alpha)=\Phi_2^+ \star \|U_{\rm {PS}}(\pi\alpha)\rangle\langle U_{\rm {PS}}(\pi\alpha)\|\star \mathrm{id}_{\rm Aux^\prime},
\end{equation}
where $\|U_{\rm {PS}}(\pi\alpha)\rangle$ is the vectorization of the unitary $U_{\rm {PS}}(\pi\alpha)$. Clearly, $\Upsilon_2^{\rm coh}(0)=\Upsilon_2^{\rm dc}$ and $\Upsilon_2^{\rm coh}(1)=\Upsilon_2^{\rm cc}$.

For processes with indefinite causal order, one can also state a similar observation. In particular, consider a bipartite process matrix $\Upsilon_2$, where now we do not assume an external causal structure. The constrains on this operator can be written as (see Ref.~\cite{ABCFGB15} for details)
\begin{subeqnarray}\label{eq:pm_constr}
\Upsilon_2={}_{\Ao} \Upsilon_2+{}_{\Bo} \Upsilon_2-{}_{\Ao\Bo}\Upsilon_2,\\ 
{}_{\Ai\Ao} \Upsilon_2={}_{\Ai\Ao\Bo} \Upsilon_2, \\
{}_{\Bi\Bo} \Upsilon_2={}_{\Bi\Bo\Ao} \Upsilon_2,
\end{subeqnarray}
together with $\Upsilon_2\pos 0$ and $\tr \Upsilon_2=\dim(\mathcal{H}_{\Ao}\otimes \mathcal{H}_{\Bo})$. Here, we use the notation introduced in Ref.~\cite{ABCFGB15}, namely ${}_X W=(\dim\mathcal{H}_X)^{-1}\tr_{X} W\otimes \mathrm{id}_X$. 

Now notice that $\tr_{\Ai\Bi} \Upsilon_2=\mathrm{id}_{\Ao\Bo}$, therefore, $\Upsilon_2$ also represents a channel connecting outputs to inputs. A consequence of this observation is that one can state the fact already notice in Ref.~\cite{OCB12} that no causal loops are allowed in a process satisfying Eq.~(\ref{eq:pm_constr}) with an ``operational flavor'' as discussed in the main text. This is possible because Eq.~\eqref{eq:pm_constr} implies
\begin{equation}\label{eq:nocl}
\mathrm{P}\left(\frac{\tr_{\Bi\Ao} \Upsilon_2}{\dim\mathcal{H}_{\Bo}}\right)+\mathrm{P}\left(\frac{\tr_{\Ai\Bo} \Upsilon_2}{\dim\mathcal{H}_{\Ao}}\right)\leq 1,
\end{equation}
which leads to our interpretation in the main text. This inequality is derived as follows. First, we multiply both sides of the first condition in Eq.~\eqref{eq:pm_constr} by $W\otimes \tilde{W}$, where $W\in \mathrm{Ch}_{\Bi \Ao}^\rightarrow$ and $\tilde{W}\in \mathrm{Ch}_{\Ai \Bo}^\rightarrow$ arbitrary. Next, we use the other two conditions together with the definition of $\rm P$. Finally, the inequality follows from the positivity of the process matrix $\Upsilon_2$. The inequality \eqref{eq:nocausalloops} in the main text is equivalent to Eq.~\eqref{eq:nocl}, when we use our construction of Proposition~\ref{prop:3} replacing maximum by minimum for the corresponding marginal channels.

\bibliography{main}

\end{document}